\newcolumntype{x}[1]{%
>{\centering\hspace{0pt}}p{#1}}%
\def\eqd{\buildrel {\rm d} \over =}
\def\cw#1 { \overset{{P}}{\underset{#1}{\longrightarrow}} }
\def\Real{\mathbb{R}}
\def\P#1{{{P}}\left(#1\right)}
\def\E#1{{E}\left[#1\right]}
\def\Var#1{{\mathrm Var}\left(#1\right)}
\def \hix#1{ {\rm HIX}\left(#1\right) }
\def \six#1{ {\rm SIX}\left(#1\right) }
\def \rhix#1{ {{\rm RHIX}}\left(#1\right) }
\def \rcov#1#2 {{\rm cov}_{#1}\left( #2\right)}
\newtheorem{notation}{Notation}
\newtheorem{example}{Example}
\newtheorem{lemma}{Lemma}
\newtheorem{theorem}{Theorem}
\newtheorem{definition}{Definition}
\newtheorem{corollary}{Corollary}
\newtheorem{remark}{Remark}
\newtheorem{assumption}{Assumption}
\newtheorem*{toy*}{Toy Model}
\def\cov#1{{\rm  cov}\left[#1\right]}
\begin{document}
\begin{frontmatter}
\title{Negative Dependence Concept in Copulas and the Marginal Free Herd Behavior Index}

\author[EH]{Jae Youn Ahn\corref{cor2}}
\ead{jaeyahn@ewha.ac.kr}
\address[EH]{Department of Statistics, Ewha Womans University, 11-1 Daehyun-Dong, Seodaemun-Gu, Seoul 120-750, Korea.}

\cortext[cor2]{Corresponding Author}

\begin{abstract}
We provide a set of copulas that can be interpreted as having the negative extreme dependence.
This set of copulas is interesting because it coincides with countermonotonic copula
for a bivariate case, and more importantly, is shown to be minimal in concordance ordering in the sense that
no copula exists which is strictly smaller than the given copula outside the proposed copula set.
Admitting the absence of the minimum copula in multivariate dimensions greater than 2,
the study of the set of minimal copulas can be important in the investigation of various optimization problems.
To demonstrate the importance of the proposed copula set,
we provide the variance minimization problem of the aggregated sum with arbitrarily given uniform marginals.
As a financial/actuarial application of these copulas, we define
a new herd behavior index using weighted Spearman's rho, and determine the sharp lower bound of the index
using the proposed set of copulas.

\end{abstract}

\end{frontmatter}

\vfill

\pagebreak

\vfill

\pagebreak
\section{Introduction}\label{sec.1}
The study of the dependence structure between random variables via copula
is a classical problem in statistics and other applications.
The ease of application of copulas has led to their popularity in various areas such as finance, insurance, hydrology and
medical studies; see for example, \citet{Frees}, \citet{Genest} and \citet{Cui}.
This paper examines the mathematical property of copulas by focusing on their lower bound.

Every copula is bounded by Fr\'echet-Hoeffding lower and upper bounds. While Fr\'echet-Hoeffding upper bound corresponds to
the maximum copula, Fr\'echet-Hoeffding lower bound is generally not a copula.
Further, the minimum copula does not exist in general in high dimensions greater than $2$; see, for example, \citet{Kotz2} and \citet{Joe2}.

In the insurance and finance field, the maximum copula corresponds to the concept called comonotonicity \citep{Dhaene}.
In the respect of risk management, comonotonicity is an important concept, because it
can be used to
describe the perfect positive dependence between competing risks. Importantly it provides the solution
to various optimization (maximization) problems.
However, unlike the perfect positive dependence, mainly due to the absence of the minimum copula,
controversy has remained even in the definition of negative extreme dependence
In spite of these difficulties, the need for the concept of negative extreme dependence has remained in insurance and other
applications because it may lead to solutions for related optimization problems.
Many studies have investigated the negative extreme dependence in various contexts.
\citet{Dhaene4}, \citet{Cheung4} and \citet{Cheung6} defined the concept of mutual exclusivity which can be regarded as pairwise countermonotonic movements. On the other hand, \citep{Ruodu} proposed the concept of complete mixability, which can be used to minimize the variance
of the sum of random variables with given marginal distributions.
Many papers have recently been published in this field \citep{Ruodu2, Ruodu3, Ruodu4, Ruodu5}.
While the concepts of mutual exclusivity and complete mixability are both useful in various fields of optimization problems, since their concepts
both depend on the marginal distributions and are problem specific,
they may not provide the general concept of negative dependence.

\citet{Ahn7} proposed a set of negative dependence joint distributions,
which is named as $d$-countermonotonic copulas ($d$-CM).
The definition of $d$-CM is known to be the definition of copula only.
Furthermore, the set of $d$-CM copulas is minimal in terms of concordance ordering:
there is no copula which is strictly smaller in concordance ordering
than the given $d$-CM copula except $d$-CM copulas.
Admitting the absence of the minimum element in multivariate dimensions $d\ge 3$,
the set of minimal copulas can be important in optimization problems.
However, without understanding the further properties of $d$-CM copulas,
choosing the proper $d$-CM copulas for the given optimization problem can be difficult.
Furthermore, as specified in \citet{Ruodu3}, $d$-CM can be too general to be used for the negative extreme dependence
For example, any vector $(V, V, \cdots,V, 1-V)$ with $V$ being a uniform[0,1] random variable is $d$-CM, while it is close to a comonotonic random vector
except the last element.
Hence in this paper, to remove such an almost comonotonic case and emphasize the negative extreme dependence concept,
we consider only a special subset of $d$-CM copulas, 
which will be parameterized by the vector $\overrightarrow{w}\in \Real_+^d$,
where $\Real_+^d$ is a $d$-dimensional positive Euclidean space.
Such set of copulas will be named as $\overrightarrow{w}$-countermonotonic copulas ($\overrightarrow{w}$-CM).
Due to the minimality property of the set of $\overrightarrow{w}$-CM copulas, which is inherited from $d$-CM, we expect that the set of $\overrightarrow{w}$-CM copulas
might be also useful in various optimization problems.

However, before we discuss the usefulness of $\overrightarrow{w}$-CM copulas in optimization problems,
the existence of $\overrightarrow{w}$-CM copulas should be first investigated.
While the existence of $\overrightarrow{w}$-CM copulas with 
$$\overrightarrow{w}=(1,\cdots, 1)\in \Real_+^d$$
is well known in the literature,
see, for example, \citet{Ahn7}, existence of $\overrightarrow{w}$-CM copulas is not guaranteed for general $\overrightarrow{w}\in \Real_+^d$.
This paper provides the equivalence condition for the existence of $\overrightarrow{w}$-CM copulas. For the proof and construction of the copula,
we use a simple geometrical method to construct the copula.
A similar result obtained by using an algebraic method can be found in a recent working paper by \citet{Ruodu4}.

Since $\overrightarrow{w}$-CM is the property of the copula only,
the usefulness of $\overrightarrow{w}$-CM may be limited to some optimization problems which do not depend on marginal distributions.
\citet{Ruodu3} also note the possible limitedness of $\overrightarrow{w}$-CM (hence $\overrightarrow{d}$-CM) in solving optimization problems
by commenting that {\it any dependence concept which does not take into account marginal distributions may fail to
solve optimization problems which depend on marginal distributions}. Variance minimization of the aggregated sum with given marginal distributions,
which is formally stated in \eqref{var.eq} below, is one such example;
detailed literature can be found in \citet{Ruschendorf, Ruschendorf2, Ruodu, Ruodu3}.
As can be intuitively expected, and as will be shown in Section \ref{optim} below, it can be shown that no single copula
universally minimizes the variance of the aggregated sum with arbitrarily given marginals.
However, we will show that using a set of $\overrightarrow{w}$-CM copulas rather than a single copula
can minimize the variance of the aggregated sum for varying marginal distributions when restricted to the uniform distribution family.
While our result provides a general solution with no restriction on $\overrightarrow{w}\in\Real_+^d$,
a partial solution can be observed in \citet{Ruodu4} for some special cases of $\overrightarrow{w}\in\Real_+^d$
where they are mainly interested in so called {\it joint mixability} which aims for the constant aggregated sum.
More detailed results will be provided in Section \ref{optim}.

For a financial application of $\overrightarrow{w}$-CM, we provide a new definition of the herd behavior index.
Herd behaviors describe the comovement of members in a group.
Since herd behaviors in the stock markets are observed usually during financial crises \citep{Dhaene2, Ahnhix},
measuring the herd behavior can be important in managing financial risks.
Focusing on the fact that the perfect herd behavior can be modeled with the comonotonicity,
some herd behavior indices that measure the degree of comonotonicity via the concept of the (co)variance have been
proposed \citep{Dhaene2,Daniel14, Ahnhix}.
Measuring the herd behavior using such herd behavior indices can be important as it has been
shown to be an indicator of the market fear.
However, while the concept of comonotonicity is free of marginal distribution (and hence so is the herd behavior),
these herd behavior measures can depend on marginal distributions, as will be shown in Example \ref{ex..1} below. Alternatively,
we define the new herd behavior index based on a weighted average of bivariate Spearman's rho.
This new herd behavior index is not affected by the marginal distributions by definition and will be shown to preserve the concordance ordering.
We also show that the maximum and minimum of the new herd behavior are closely related with comonotonicity and $\overrightarrow{w}$-CM.

The rest of this paper is organized as follows.
We first summarize the study notations, and briefly explain basic copula theory and countermonotonicity theory in Section \ref{sec.2}.
The concept of $\overrightarrow{w}$-CM is introduced in Section \ref{sec.3}, and
the existence of $\overrightarrow{w}$-CM copula is demonstrated in Section \ref{sec.cond}.
Section \ref{optim} applies the concept of $\overrightarrow{w}$-CM to variance minimization problems.
The definition and minimization of the new herd behavior index are discussed in in Section \ref{hix.section}, which is followed by the conclusions.

\section{Notations and Preliminary Results}\label{sec.2}
\subsection{Conventions}
Let $d\ge 2$ be integers and $\Real^d$ denotes $d$ dimensional Euclidean space. Especially, let
$\Real_+^d$ be $d$ dimensional positive Euclidean space.
Further $[a,b]\times [a,b]\cdots\times [a,b] \subseteq \Real^d$ is denoted by $[a, b]^d$.
We use $\overrightarrow{\cdot}$ to denote $d$-variate vectors: especially,
lower case $$\overrightarrow {x}=(x_1,x_2,\cdots,x_d)$$ denotes constant vectors in $\Real^d$ and
upper case $$\overrightarrow {X}=(X_1,X_2,\cdots,X_d)$$ denotes $d$-variate random vectors.
More specifically
$$\overrightarrow{u}:=(u_1, \cdots, u_d)\quad\hbox{and}\quad \overrightarrow{w}:=(w_1, \cdots, w_d)$$
will be used to denote constant vectors in $[0,1]^d$ and $\Real_+^d$, respectively.
Finally, use $V$ to denote a uniform$[0,1]$ random variable.

Unless specified, we assume $\overrightarrow {X}$ be a $d$-dimensional random vector having
$H$ as its cumulative distribution function defined by
\[
H(\overrightarrow {x})=P(X_1 \leq x_1 , \cdots, X_d \leq x_d)
\quad \hbox{for} \quad\overrightarrow {x}\in \Real^d,
\]
and the marginal distribution of $X_i$ is $F_i (y):=P(X_i \leq y)$ for $i\in\{1, \cdots, d\}$ and $y\in\Real$.
Define $\mathcal{F}_d(F_1, \cdots, F_d)$ to be the Fr\'echet space of $d$-variate random vectors with marginal distribution $F_1, \cdots, F_d$.
Hence, $\overrightarrow{X}\in \mathcal{F}_d(F_1, \cdots, F_d)$.
Equivalently, we also denote $H\in \mathcal{F}_d(F_1, \cdots, F_d)$.
We use $\mathcal{F}_d$ to denote the special case of Fr\'echet space, where all marginal distributions are uniform$[0,1]$.

This paper assumes that marginals distributions are continuous.
According to Sklar (1959), given $H\in\mathcal{F}(F_1, \cdots, F_d)$, there exists a unique function $C:[0,1]^d\rightarrow [0,1]$ satisfying
\[
H({\overrightarrow{x}})=C(F_1 (x_1),\cdots,F_d(x_d)).
\]
The function $C$ is called a copula, which is also a distribution function on $[0,1]^d$.
Further information on copulas can be found, for example, \citet{Cherubini} and \citet{Nelson}.

Any $H\in \mathcal{F}(F_1, \cdots, F_d)$ satisfies
\[
 W(F_1(x_1), \cdots, F_d(x_d))\le H(\overrightarrow{x})\le M(F_1(x_1), \cdots, F_d(x_d)), \quad\hbox{for\, all}\quad \overrightarrow{x}\in\Real^d,
\]
where
\begin{equation}\label{intro.1}
W(\overrightarrow {u}):=\max\{u_1+\cdots+u_d-(d-1),0 \}\quad \hbox{and}\quad M(\overrightarrow {u}):=\min\{u_1,\cdots, u_d \},
\end{equation}
for $\overrightarrow{u}\in[0,1]^d $. $W$ and $M$ in \eqref{intro.1} are called the Fr\'{e}chet-Hoeffding lower and Fr\'{e}chet-Hoeffding upper bounds, respectively.
Note that  $M(F_1, \cdots, F_d)$  is a cumulative distribution of a $d$-variate random vector
while $W(F_1, \cdots, F_d)$ is not in general.
Let $\overline{H}$ be a survival distribution function defined as
\[
 \overline{H}(\overrightarrow{x}):=\P{X_1>x_1, \cdots, X_d>x_d}\quad\hbox{for}\quad \overrightarrow{x}\in \Real^d.
\]
 For $H, H^*\in \mathcal{F}(F_1, \cdots, F_d)$, the concordance ordering $H\prec H^*$ is defined by
\[
 H(\overrightarrow {x})\le H^*(\overrightarrow {x}) \quad \hbox{and}\quad \overline{H}(\overrightarrow {x})\le \overline{H^*}(\overrightarrow {x}) \quad\hbox{for all}\quad \overrightarrow {x}\in \Real^d.
\]
Furthermore, define $H= H^*$ if $$H(\overrightarrow{x})=H^*(\overrightarrow{x})$$ for any $\overrightarrow{x}\in \Real^d$.
Equivalently, denote $\overrightarrow{X}\eqd \overrightarrow{X^*}$ if $H= H^*$, where the cumulative distribution function of $\overrightarrow{X^*}$ is $H^*$.
Unless specified,
\[
 \overrightarrow{U}:=(U_1, \cdots, U_d),\quad\overrightarrow{U^*}:=(U_1^*, \cdots, U_d^*), \quad\hbox{and}\quad\overrightarrow{U^{**}}:=(U_1^{**}, \cdots, U_d^{**})
\]
are $d$-variate random vectors in $\mathcal{F}_d$ having copula $C$, $C^*$ and
$C^{**}$ as cumulative distributions functions, respectively. For example,
\[
 \P{U_1\le u_1, \cdots, U_d\le u_d}=C(\overrightarrow{u})
\]
for $\overrightarrow{u}\in [0,1]^d$.

It will be convenient to define the minimal and minimum copulas.
 For $d\ge 2$, we define $d$-dimensional copula $C\in \mathcal{F}_d$ as a minimum(maximal) copula if the inequality
 \[
  C^*\succ(\prec) C 
 \]
for any $d$-dimensional copula $C^*\in \mathcal{F}_d$.
Similarly, for $d\ge 2$, define $d$-dimensional copula $C\in \mathcal{F}_d$ as a minimal(maximal) copula if the inequality
 \[
  C^*\prec(\succ) C 
 \]
for some $d$-dimensional copula $C^*\in \mathcal{F}_d$ implies $C^*= C$.
Define the set of copulas $\mathbb{C}\subseteq \mathcal{F}_d$ to be {\it minimal in set concordance ordering} if any $C\in\mathbb{C}$ and $C^*\in\mathcal{F}_d$ with
\[
 C^*\prec C
\]
 implies
\[
 C^*\in \mathbb{C}.
\]
By definition, $\mathbb{C}$ is minimal in set concordance ordering if $\mathbb{C}$ is empty.
Clearly, the definition of minimality in set concordance ordering is a weaker concept than the definition of minimal copula.
In the minimality  of set concordance ordering, the quality of the minimality depends on the size of the set.
For example, Fr\'echet space is minimal in set concordance ordering.
On the other hand, if $\mathbb{C}$ has a single element, the definition of the minimality in set concordance ordering coincides with the definition of the minimal copula.

\subsection{Review of $d$-Countermonotonicity}
Comonotonicity has gained popularity in actuarial science and finance.
Conceptually, a random vector $\overrightarrow {X}$ is comonotonic if all of its components move in the same direction.
Comonotonicity is useful in several areas, such as the bound problems of an aggregate sum \citep{Goovaerts3, Cheung3} and hedging problems \citep{Cheung2}.
Recently, comonotonicity has been used in describing the economic crisis \citep{Dhaene2, Dhaene3, Ahnhix}.

Countermonotonicity is the opposite concept to comonotonicity.
Conceptually, in the bivariate case, a random vector $\overrightarrow {X}$ is countermonotonic if two components move
in the opposite directions.
The following classical results summarize
the equivalent conditions of countermonotonicity in bivariate dimensions.
\begin{definition}\label{def.6}
 A set $A\subset \Real^2$ is countermonotonic(comonotonic) if the following inequality holds
 \[
  (x_1-y_1)(x_2-y_2)\le(\ge) 0 \quad\hbox{for all}\quad {\overrightarrow{x}},\,{\overrightarrow{y}}\in \Real^2.
 \]
$\overrightarrow{X}$ is called countermonotonic(comonotonic) if it has countermonotonic(comonotonic) support.
\end{definition}

\begin{theorem}\label{thm.counter.1}
 For a bivariate random vector $\overrightarrow{X}$, we have the following equivalent statements.
 \begin{enumerate}
  \item[i.] $\overrightarrow{X}$ is countermonotonic
  \item[ii.] For any $\overrightarrow{x}\in \Real^2$
  \begin{equation}\label{eq.counter.4}
    \P{{\overrightarrow{X}}\le \overrightarrow{x}}=\max\left\{F_1(x_1)+F_2(x_2)-1,0 \right\}
  \end{equation}
  \item[iii.] For $\hbox{Uniform\rm($0,1$)}$ random variable $U_1$, we have
  \[
   {\overrightarrow{X}}\eqd \left(F_1^{-1}(U_1), F_2^{-1}(1-U_1) \right).
  \]

 \end{enumerate}

\end{theorem}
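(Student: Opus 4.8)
The plan is to prove the cycle of implications (i) $\Rightarrow$ (ii) $\Rightarrow$ (iii) $\Rightarrow$ (i). The one tool used throughout is the elementary identity, valid for every bivariate $\overrightarrow{X}$,
\[
 H(\overrightarrow{x}) \;=\; F_1(x_1)+F_2(x_2)-1+\overline{H}(\overrightarrow{x}),
\]
which follows by applying inclusion--exclusion to $P(X_1\le x_1)+P(X_2\le x_2)$ together with $P(X_1\le x_1 \text{ or } X_2\le x_2)=1-\overline{H}(\overrightarrow{x})$. For (i) $\Rightarrow$ (ii): the Fr\'echet--Hoeffding bound already gives $H(\overrightarrow{x})\ge\max\{F_1(x_1)+F_2(x_2)-1,0\}$, so in view of the identity it suffices to show $\min\{H(\overrightarrow{x}),\overline{H}(\overrightarrow{x})\}=0$ for every $\overrightarrow{x}$ (after which a one-line case split on which of the two terms vanishes yields the claimed formula). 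Suppose both were positive for some $\overrightarrow{x}$. Then the open quadrant $(x_1,\infty)\times(x_2,\infty)$ has positive probability and hence meets the support $A$ of $\overrightarrow{X}$, and the closed quadrant $(-\infty,x_1]\times(-\infty,x_2]$ has positive probability and hence also meets $A$. Picking $\overrightarrow{a}\in A$ in the first and $\overrightarrow{b}\in A$ in the second gives $a_1-b_1>0$ and $a_2-b_2>0$, so $(a_1-b_1)(a_2-b_2)>0$, contradicting the countermonotonicity of $A$.

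For (ii) $\Rightarrow$ (iii): let $U_1$ be $\mathrm{Uniform}[0,1]$ and set $Y:=\bigl(F_1^{-1}(U_1),\,F_2^{-1}(1-U_1)\bigr)$. Using the Galois property of the generalized inverse, $F_i^{-1}(t)\le x \iff t\le F_i(x)$, one computes
\[
 P(Y_1\le x_1, Y_2\le x_2)=P\bigl(1-F_2(x_2)\le U_1\le F_1(x_1)\bigr)=\max\{F_1(x_1)+F_2(x_2)-1,\,0\},
\]
so $Y$ has exactly the distribution function prescribed in (ii); since (ii) says $\overrightarrow{X}$ has that same distribution function, $\overrightarrow{X}\eqd Y$, which is statement (iii). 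For (iii) $\Rightarrow$ (i): with $Y$ as above and $\overrightarrow{X}\eqd Y$, observe that $u\mapsto F_1^{-1}(u)$ is non-decreasing while $u\mapsto F_2^{-1}(1-u)$ is non-increasing, so the graph $G:=\{(F_1^{-1}(u),F_2^{-1}(1-u)):u\in(0,1)\}$ is countermonotonic, and so is its closure $\overline{G}$ since the defining inequality $(x_1-y_1)(x_2-y_2)\le 0$ is a closed condition. As $Y\in G$ almost surely, $\mathrm{supp}(\overrightarrow{X})=\mathrm{supp}(Y)\subseteq\overline{G}$, and a subset of a countermonotonic set is countermonotonic; hence $\overrightarrow{X}$ is countermonotonic.

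I expect the only genuinely delicate point to be the measure-theoretic bookkeeping in (i) $\Rightarrow$ (ii): one has to justify that a quadrant carrying positive probability must intersect the topological support (immediate for open quadrants; for the closed one, use that the complement of the support is null), and keep the open-versus-closed distinction straight, invoking continuity of the marginals if one wants to move freely between strict and weak inequalities. Everything else --- the computation in (ii) $\Rightarrow$ (iii) and the monotonicity and closedness observations in (iii) $\Rightarrow$ (i) --- is routine.
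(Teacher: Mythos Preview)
The paper does not prove this theorem: it is stated as a classical result (``The following classical results summarize the equivalent conditions of countermonotonicity in bivariate dimensions'') and is not accompanied by any argument, so there is no paper proof to compare against.

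Your proof is correct and follows the standard textbook route (the countermonotonic analogue of the well-known comonotonicity equivalences in, e.g., Dhaene et al.). The cycle (i) $\Rightarrow$ (ii) $\Rightarrow$ (iii) $\Rightarrow$ (i) is the usual one; the inclusion--exclusion identity $H=F_1+F_2-1+\overline{H}$ together with $\min\{H,\overline{H}\}=0$ is exactly how (i) $\Rightarrow$ (ii) is typically handled, and your treatment of the support-intersection step is adequate (the complement of the support is open and null, so any positive-probability set must meet the support). The computation in (ii) $\Rightarrow$ (iii) via the Galois property of the quantile function and the closure argument in (iii) $\Rightarrow$ (i) are both routine and correctly executed. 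Since the paper assumes continuous marginals throughout, you could even simplify slightly by not worrying about strict versus weak inequalities, but your more careful version works in general.
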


While the extension of comonotonicity into multivariate dimensions $d>2$ is straightforward, there is no obvious extension of countermonotonicity
into multivariate dimensions  $d>2$.
As discussed in \citet{Ahn7}, the difficulty of the extension of countermonotonicity arises partially due to the lack of minimum copula.
In this paper, we provide a set of minimal copulas, which can be viewed as a natural extension of countermonotonicity in two dimension into
multivariate dimensions.

\section{Weighted Countermonotonicity}\label{sec.3}

As an extension of countermonotonicity or negative extreme dependence in multivariate dimensions, \citet{Ahn7}
introduced the concept of $d$-CM.
While $d$-CM copulas are theoretically interesting, the existence and construction of $d$-CM copulas
with certain parametric functions remain unknown, and it may therefore be hard to apply $d$-CM copulas to various optimization problems.
Furthermore, the concept of $d$-CM may be too general
to describe the negative dependence concept as briefly specified in \citet{Ruodu3},
where the example of $(V, V, \cdots, V, 1-V)$ was given.
Alternatively, \citet{Ahn7} introduced the concept of strict $d$-CM as a special case of $d$-CM,
which is useful in some minimization problems.
However, because of the symmetricity of strict $d$-CM,
it cannot be used for non-symmetric optimization problems, as will be explained in Section \ref{optim}.
For completeness in the paper, we have summarized the definitions and properties of (strict) $d$-CM in the Appendix.

In this section, we introduce a new class of extremal negative dependent copulas,
which will be called $\overrightarrow{w}$-Countermonotonic ($\overrightarrow{w}$-CM) copulas
and can be interpreted as a set of minimal copulas as shown in Corollary \ref{rem.2} below.
Remark \ref{remark1} addresses
that the set of $\overrightarrow{w}$-CM copulas can be interpreted as generalized strict $d$-CM, and
further shows that $\overrightarrow{w}$-CM copulas are the subset of $d$-CM copulas.

\begin{definition}\label{def.w.1}
A $d$-variate random vector $\overrightarrow{X}$ is  $\overrightarrow{w}$-CM if
\begin{equation*}
\P{\sum\limits_{i=1}^d w_i\, F_i(X_i)=\frac{\sum_{i=1}^{d}w_i}{2}}=1.
\end{equation*}
Equivalently, we say that $H$ is $\overrightarrow{w}$-CM if $\overrightarrow{X}$ is $\overrightarrow{w}$-CM.
Furthermore, when $\overrightarrow{X}$ is $\overrightarrow{w}$-CM, we define $\overrightarrow{w}$ as a shape vector of $\overrightarrow{X}$.
\end{definition}

$\overrightarrow{w}$-CM can be regarded as multivariate extension of countermonotonicity into multivariate dimensions.
First, assume that $\overrightarrow{X}\in\mathcal{F}_2(F_1, F_2)$ is countermonotonic. Then, since $\overrightarrow{X}$ is a continuous random vector,
Theorem \ref{thm.counter.1}. iii concludes that
\[
\begin{aligned}
 F_1(X_1)+F_2(X_2) &\eqd F_1\circ F_1^{-1}(U_1) +F_2\circ F_2^{-1}(1-U_1) \\
\end{aligned}
\]
which in turn implies $\overrightarrow{X}$ is $\overrightarrow{w}$-CM for any $w_1=w_2>0$.
On the other hand, assume that $\overrightarrow{X}$ is $\overrightarrow{w}$-CM with $w_1=w_2>0$. Then by Definition \ref{def.w.1}, we have
 \[
  F_1(X_1)+F_2(X_2)=1
 \]
with probability $1$, which in turn concludes that the support of $\overrightarrow{X}$ is countermonotonic.
So we can conclude that
$\overrightarrow{X}\in\mathcal{F}_2(F_1, F_2)$ is countermonotonic if and only if $\overrightarrow{X}$ is $\overrightarrow{w}$-CM with $w_1=w_2$.

As can be expected from Definition \ref{def.w.1}, $\overrightarrow{w}$-CM is a property of copula only, and this is summarized in the following lemma.
The proof is similar to that of Lemma 1 in \citet{Ahn7}.
However, the result in the following lemma is more useful as it shows that the shape vector is invariant to marginal distributions.

\begin{lemma}
 Let $\bf X$ and ${\bf X}^*$ be random vectors from the distribution functions
 \[
  H=C(F_1, \cdots, F_d) \quad \hbox{and}\quad H^*=C(F_1^*, \cdots, F_d^*),
 \]
respectively, where marginal distribution functions, $F_1, \cdots, F_d$, are possibly different from marginal distribution functions, $F_1^*,\cdots, F_d^*$.
Then $\bf X$ is $\overrightarrow{w}$-CM if and only if ${\bf X}^*$ is $\overrightarrow{w}$-CM.
\end{lemma}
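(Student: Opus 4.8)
The key observation is that $\overrightarrow{w}$-CM is defined purely through the quantities $F_i(X_i)$, each of which is a Uniform$[0,1]$ random variable by the probability integral transform (since the marginals are continuous). The plan is therefore to reduce both sides to a statement about the common copula $C$. First I would recall Sklar's theorem: writing $\overrightarrow{U} := (F_1(X_1), \dots, F_d(X_d))$ and $\overrightarrow{U^*} := (F_1^*(X_1^*), \dots, F_d^*(X_d^*))$, both $\overrightarrow{U}$ and $\overrightarrow{U^*}$ have the \emph{same} distribution function on $[0,1]^d$, namely the copula $C$. This is because, for continuous marginals, $P(F_i(X_i)\le u_i,\ \forall i) = C(u_1,\dots,u_d)$, and the right-hand side does not involve the marginals at all. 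Hence $\overrightarrow{U}\eqd\overrightarrow{U^*}$.

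Next I would note that the event whose probability appears in Definition \ref{def.w.1} is $\{\,g(F_1(X_1),\dots,F_d(X_d)) = 0\,\}$ where $g(\overrightarrow{u}) := \sum_{i=1}^d w_i u_i - \tfrac{1}{2}\sum_{i=1}^d w_i$ is a fixed Borel (indeed continuous) function on $[0,1]^d$ that depends only on $\overrightarrow{w}$, not on any marginal. Therefore
\[
P\!\left(\sum_{i=1}^d w_i F_i(X_i) = \tfrac{1}{2}\textstyle\sum_{i=1}^d w_i\right) = P\big(\overrightarrow{U}\in g^{-1}(\{0\})\big) = P\big(\overrightarrow{U^*}\in g^{-1}(\{0\})\big) = P\!\left(\sum_{i=1}^d w_i F_i^*(X_i^*) = \tfrac{1}{2}\textstyle\sum_{i=1}^d w_i\right),
\]
the middle equality being exactly $\overrightarrow{U}\eqd\overrightarrow{U^*}$ applied to the measurable set $g^{-1}(\{0\})$. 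Consequently the first probability equals $1$ if and only if the last one does, which is precisely the assertion that $\mathbf{X}$ is $\overrightarrow{w}$-CM iff $\mathbf{X}^*$ is. Since the argument is symmetric in the two vectors, and since it never used anything about $F_i$ versus $F_i^*$ beyond continuity, the same reasoning shows the shape vector $\overrightarrow{w}$ is unchanged.

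There is no real obstacle here; the only point requiring a word of care is the appeal to continuity of the marginals, which is what guarantees that $F_i(X_i)$ is genuinely Uniform$[0,1]$ and that the copula $C$ is the exact (not merely a) distribution function of the transformed vector. This is already assumed globally in Section \ref{sec.2}. If one wished to be maximally explicit one could instead invoke the quantile-coupling representation $X_i\eqd F_i^{-1}(U_i)$ with $(U_1,\dots,U_d)$ having copula $C$, and then $F_i(X_i)=F_i(F_i^{-1}(U_i))=U_i$ almost surely by continuity; substituting this into the defining event makes the independence from the marginals transparent. Either route gives the result in a couple of lines, and it parallels the proof of Lemma 1 in \citet{Ahn7} as the statement advertises.
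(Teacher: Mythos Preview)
Your proof is correct and follows essentially the same route as the paper: both observe that $(F_1(X_1),\dots,F_d(X_d))$ and $(F_1^*(X_1^*),\dots,F_d^*(X_d^*))$ share the common distribution $C$, so the probability of the linear event in Definition~\ref{def.w.1} is the same for each. The paper's version is simply terser, while you spell out the probability-integral-transform step and the measurable-function argument explicitly.
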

\begin{proof}
Since two random vectors $(F_1(X_1),\cdots, F_d(X_d) )$ and $(F_1^*(X_1^*),\cdots, F_d^*(X_d^*) )$
have copula $C$ as the same distribution functions, we have
\begin{equation}
\P{\sum\limits_{j=1}^d w_1\, F_i(X_i)=\frac{\sum_{i=1}^{d}w_i}{2}}=1,
\end{equation}
if and only if
\begin{equation}
\P{\sum\limits_{j=1}^d w_1\, F_i^*(X_i^*)=\frac{\sum_{i=1}^{d}w_i}{2}}=1.
\end{equation}
Hence we conclude that
$\bf X$ is $\overrightarrow{w}$-CM if and only if ${\bf X}^*$ is $\overrightarrow{w}$-CM.
\end{proof}

In the following definition, we provide the copula version of $\overrightarrow{w}$-CM.
Note that, for the property of $\overrightarrow{w}$-CM,
it is enough to study the copula version of $\overrightarrow{w}$-CM, because $\overrightarrow{w}$-CM is a property of copula only.
Hence, throughout this paper, we will use the following definition as the definition of $\overrightarrow{w}$-CM.
\begin{definition}\label{new.def}
A $d$-variate random vector $\overrightarrow{U}$ is  $\overrightarrow{w}$-CM if
\begin{equation}\label{eq.strict}
\P{\sum\limits_{i=1}^d w_i\, U_i=\frac{\sum_{i=1}^{d}w_i}{2}}=1.
\end{equation}
Equivalently, we say that $C$ is $\overrightarrow{w}$-CM if $\overrightarrow{U}$ is $\overrightarrow{w}$-CM. 
Here, $\overrightarrow{w}$ is called as a shape vector of $\overrightarrow{X}$.
\end{definition}

\begin{remark}\label{remark1}
Note that $\overrightarrow{w}$-CM is $d$-CM with parameter functions
\[
 f_i(y)=c_i F_i(y)
\]
for $i\in\{1, \cdots, d\}$ and $y\in \Real$, where
$$c_i:=\frac{2\,w_i}{\sum\limits_{j=1}^{d}w_j}.$$
Furthermore, since $\overrightarrow{w}$-CM coincides with strict $d$-CM when $w_1=\cdots=w_d$,
the set of strict $d$-CM copula is the subset of $\overrightarrow{w}$-CM copulas.
in the Appendix.
For convenience, we summarize the definitions of $d$-CM and strict $d$-CM in Definition \ref{def.multi3} and Definition \ref{def.multi4} in the Appendix.
\end{remark}

The following corollary explains that the set of $\overrightarrow{w}$-CM copulas can be regarded
to have minimality in concordance ordering as a set.
Since $\overrightarrow{w}$-CM is a special case of $d$-CM as shown in Remark \ref{remark1}, the proof of the following corollary is immediate from
\citet{Ahn7}. However, for completeness in the paper, we present the proof in the Appendix.

\begin{corollary}\label{rem.2}
For given $\overrightarrow{w}\in\Real_+^d$, let $\mathbb{C}$ be the set of $\overrightarrow{w}$-CM: i.e. $\mathbb{C}$ is defined as
$$\mathbb{C}:=\left\{ C\in \mathcal{F}_d\big\vert \hbox{ $C$ is $\overrightarrow{w}$-CM}\right\}.$$
Then $\mathbb{C}$ is minimal in set concordance ordering.

\end{corollary}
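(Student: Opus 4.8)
The plan is to unwind the definition of minimality in set concordance ordering. Since the empty set is, by definition, minimal, I may assume $\mathbb{C}\neq\emptyset$, and it then suffices to show the following: if $C\in\mathbb{C}$ (so $C$ is $\overrightarrow{w}$-CM) and $C^*\in\mathcal{F}_d$ satisfies $C^*\prec C$, then $C^*$ is again $\overrightarrow{w}$-CM. The whole argument is driven by the single scalar $S:=\sum_{i=1}^d w_i U_i$: by Definition \ref{new.def}, being $\overrightarrow{w}$-CM is precisely the statement that $S$ is almost surely constant, equivalently $\Var{S}=0$, and the value of that constant is forced to equal $\sum_{i=1}^d w_i/2$ by taking expectations, since all margins are uniform$[0,1]$ so $\E{U_i}=1/2$.

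The key step is to convert the concordance ordering $C^*\prec C$ into an ordering of covariances. First I would note that the bivariate margins inherit the ordering: fixing the remaining coordinates at $1$ in the distribution functions (and at $0$ in the survival functions) shows $C^*_{ij}\prec C_{ij}$ for every pair $i\neq j$, where $C_{ij}$ denotes the $(i,j)$ bivariate margin. Then, by Hoeffding's covariance identity, for uniform$[0,1]$ margins
\[
\cov{U_i,U_j}=\int_0^1\!\!\int_0^1\big(C_{ij}(u,v)-uv\big)\,du\,dv ,
\]
together with the analogous formula for $C^*$, the pointwise inequality $C^*_{ij}(u,v)\le C_{ij}(u,v)$ yields $\cov{U^*_i,U^*_j}\le\cov{U_i,U_j}$ for all $i\neq j$; for $i=j$ one has the exact equality $\Var{U^*_i}=\Var{U_i}=1/12$ because the margins are uniform$[0,1]$.

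Finally I would expand the variance of the weighted sum under $C^*$ as a weighted double sum of covariances:
\[
\Var{\textstyle\sum_{i=1}^d w_i U^*_i}=\sum_{i=1}^d\sum_{j=1}^d w_i w_j\,\cov{U^*_i,U^*_j}\le\sum_{i=1}^d\sum_{j=1}^d w_i w_j\,\cov{U_i,U_j}=\Var{\textstyle\sum_{i=1}^d w_i U_i}=0 ,
\]
where the inequality uses $w_i w_j\ge 0$ with the covariance ordering from the previous step, and the last equality is the $\overrightarrow{w}$-CM property of $C$. Since a variance is nonnegative, $\Var{\sum_{i=1}^d w_i U^*_i}=0$; taking expectations identifies the almost-sure value of $\sum_{i=1}^d w_i U^*_i$ as $\sum_{i=1}^d w_i/2$, so $C^*$ is $\overrightarrow{w}$-CM and hence $C^*\in\mathbb{C}$, as required.

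The main obstacle is not depth — the proof is essentially a second-moment computation — but rather the step that passes from concordance ordering to covariance ordering: one must invoke Hoeffding's identity and justify the reduction to bivariate margins, both of which are classical. It is worth noting that strict positivity of the $w_i$ plays no role here; only $w_i\ge 0$ is used, consistent with the interpretation of $\overrightarrow{w}$-CM as a one-directional (minimal) notion. As indicated in the excerpt, this also follows from the corresponding statement for $d$-CM in \citet{Ahn7} via Remark \ref{remark1}, but the self-contained argument above avoids appealing to that machinery.
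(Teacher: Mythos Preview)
Your proof is correct, but it follows a genuinely different route from the paper's. The paper argues directly at the level of distribution functions: it covers the open half-space $\{\sum_i w_i u_i < \tfrac{1}{2}\sum_i w_i\}$ by countably many lower orthants indexed by rational points, uses $C^*(\overrightarrow{u})\le C(\overrightarrow{u})=0$ on each such orthant (the latter because $C$ is supported on the hyperplane), and concludes by countable subadditivity; the upper half-space is handled symmetrically via the survival-function half of the concordance ordering. Your argument is instead a second-moment sandwich: reduce to bivariate margins, apply Hoeffding's identity to convert concordance into covariance ordering, and squeeze the variance of $\sum_i w_i U_i^*$ between $0$ and $0$. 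What you gain is a transparent link to the variance-minimization theme of Section~\ref{optim} --- your proof is essentially the content of Remark~\ref{order.var} specialized to the degenerate case $\Var{\sum_i w_i U_i}=0$. What the paper's approach gains is that it avoids any moment machinery and works purely with the two pointwise inequalities defining $C^*\prec C$, which makes the role of the survival-function half of the ordering (needed for the upper half-space) more explicit.
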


As briefly mentioned in Section \ref{sec.1}, since there is no minimum copula available for $d\ge 3$,
it is clear that minimal copulas will play a key role in various minimization problems.
In this sense, Corollary \ref{rem.2} addresses an important property of $\overrightarrow{w}$-CM copulas:
the set of $\overrightarrow{w}$-CM copulas achieves minimality in the sense that
there are no copulas strictly smaller than the $\overrightarrow{w}$-CM copula other than $\overrightarrow{w}$-CM copulas.
Hence, the concept of $\overrightarrow{w}$-CM can be useful in various minimization/maximization problems
as will be explained in Section \ref{optim} below. For a discussion of the usage of $\overrightarrow{w}$-CM copulas, it is essential to check
the existence of $\overrightarrow{w}$-CM copulas as will be shown in the next section.

\section{Condition to Achieve the Weighted Countermonotonicity}\label{sec.cond}

Depending on the given marginal distributions, \eqref{eq.strict} may not be always achieved. For example, for
$(w_1, w_2)=(2,1)$, none of $(U_1, U_2)\in \mathcal{F}_2$
can achieve the condition in \eqref{eq.strict}.
In this section, we provide the equivalence condition of the weight $\overrightarrow{w}\in\Real_+^d$
for the existence of $\overrightarrow{w}$-CM copula.
Note that a similar result can be found in a recent working paper by \citet{Ruodu4},
which explains an algebraic way of constructing $\overrightarrow{w}$-CM copulas.
We first define the set of weights where $\overrightarrow{w}$-CM copulas exists.

\begin{notation}
Define the set of weights in $3$-dimensions as follows
\[
\mathbb{W}_3:=\left\{ (w_1, w_2, w_3)\in \Real_+^3 \big\vert\sum\limits_{i=1}^{3}w_i \ge 2\max\{w_1, w_2,w_3\}\right\} .
\]

\end{notation}

Note that the set $\mathbb{W}_3$ is equivalent with the set of the line lengths in triangles (including degenerate triangles).

\begin{lemma}\label{ex.thm0}
 For any $\overrightarrow{w}\in \mathbb{W}_3$, there exists $\overrightarrow{w}$-CM copulas.

\end{lemma}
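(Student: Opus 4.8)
The plan is to produce the $\overrightarrow{w}$-CM copula by an explicit geometric construction. By Definition~\ref{new.def} it is enough to build a random vector $(U_1,U_2,U_3)\in\mathcal{F}_3$ with $w_1U_1+w_2U_2+w_3U_3=(w_1+w_2+w_3)/2$ almost surely. Since permuting coordinates and rescaling $\overrightarrow{w}$ by a positive constant affect neither membership in $\mathbb{W}_3$ nor the $\overrightarrow{w}$-CM property, I would first normalize to $w_1\ge w_2\ge w_3>0$ and $w_1+w_2+w_3=2$; then the target identity reads $w_1U_1+w_2U_2+w_3U_3=1$, and $\overrightarrow{w}\in\mathbb{W}_3$ becomes $w_1\le w_2+w_3$, equivalently $w_1\le 1$ (so $w_i\le 1$ for all $i$), while the two remaining triangle inequalities hold automatically from the ordering.

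The key reduction is that in dimension three it suffices to construct the \emph{bivariate} margin $(U_1,U_2)$ of the copula. If $(U_1,U_2)$ has uniform$[0,1]$ margins and $w_1U_1+w_2U_2$ is uniform on $[\,w_1+w_2-1,\ 1\,]=[\,1-w_3,\ 1\,]$, then $U_3:=(1-w_1U_1-w_2U_2)/w_3$ lies in $[0,1]$, is uniform$[0,1]$ (a reflection of a uniform variable), and satisfies $w_1U_1+w_2U_2+w_3U_3=1$ by construction. Centering by $X:=w_1(U_1-\tfrac12)$ and $Y:=w_2(U_2-\tfrac12)$, the problem becomes: couple $X\sim\mathrm{Uniform}[-p,p]$ with $Y\sim\mathrm{Uniform}[-q,q]$ so that $X+Y\sim\mathrm{Uniform}[-r,r]$, where $p:=w_1/2\ge q:=w_2/2\ge 0$ and $r:=w_3/2$, and where the defining inequality of $\mathbb{W}_3$ translates exactly into $|p-q|\le r\le p+q$. (This is, up to scaling, the joint mixability of three uniform laws referred to in the introduction, with common sum $0$.)

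For this bivariate coupling I would support the joint law on a finite union of line segments inside the rectangle $[-p,p]\times[-q,q]$, using only two slopes: a \emph{comonotone} slope $+q/p$, along which $x+y$ changes at the fastest rate (proportional to $p+q$), and a \emph{countermonotone} slope $-q/p$, along which $x+y$ changes at the slowest rate (proportional to $p-q$); I would then translate these segments into a staircase so that their $x$-projections tile $[-p,p]$ once, their $y$-projections tile $[-q,q]$ once, and their $(x+y)$-projections tile $[-r,r]$ once. A short bookkeeping shows what the budget must be: if the comonotone segments carry total $x$-extent $a$ and the countermonotone ones total $x$-extent $b$, then $a+b=2p$ makes the $x$-projections tile $[-p,p]$; since $|dy/dx|=q/p$ on every segment, this forces the total $y$-extent to equal $(q/p)(a+b)=2q$, so the $y$-projections can also be made to tile $[-q,q]$; and requiring the $(x+y)$-projections to cover $[-r,r]$ exactly once gives
\[
a=\frac{p\,(q+r-p)}{q},\qquad b=\frac{p\,(p+q-r)}{q}.
\]
Here $a\ge 0$ is precisely $w_1\le w_2+w_3$, and $b\ge 0$ holds automatically; thus the construction is feasible exactly when $\overrightarrow{w}\in\mathbb{W}_3$, consistent with the remark that $\mathbb{W}_3$ is the set of side lengths of (possibly degenerate) triangles.

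I expect the main obstacle to be the final packing step: verifying that the comonotone and countermonotone segments can be arranged \emph{simultaneously} so that all three projections tile their respective intervals without overlap and so that $x+y$ never leaves $[-r,r]$. I would carry this out by writing down an explicit piecewise-linear path $v\mapsto(X(v),Y(v))$, $v\in[0,1]$, running through the rungs of the staircase in order, with rung lengths and vertical offsets read off from the budget above, and then checking uniformity of each margin via the elementary criterion that a piecewise-linear image of a uniform variable is uniform if and only if its coverage multiplicity is identically one. This is the geometric core of the argument, and it is precisely where the triangle inequality is used; the two extreme cases $w_1=w_2+w_3$ ($a=0$, a purely countermonotone coupling) and $w_1=w_2=w_3$ (the symmetric half-and-half staircase) serve as sanity checks.
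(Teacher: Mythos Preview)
Your reduction to a two–dimensional coupling problem is sound and is a genuinely different route from the paper: the paper works directly in $[0,1]^3$, placing mass on the three edges of the triangle with vertices $(1,z_1,0)$, $(0,1,z_2)$, $(z_3,0,1)$ lying in the plane $\sum w_iu_i=\tfrac12\sum w_i$, and then solves a $3\times 3$ linear system for the edge masses $m_{12},m_{23},m_{31}$. Projected to the $(u_1,u_2)$–plane, that construction uses \emph{three} segments of three different slopes with three different linear densities, not a two–slope staircase.

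The gap in your proposal is in the uniformity criterion you plan to invoke for $X+Y$. The statement ``a piecewise-linear image of a uniform variable is uniform iff its coverage multiplicity is identically one'' requires the absolute slope to be constant across pieces. Along your path $v\mapsto(X(v),Y(v))$ with $X'(v)$ constant and $Y'(v)=\pm(q/p)X'(v)$, the derivative of $X+Y$ equals $(p+q)/p\cdot X'(v)$ on comonotone rungs and $(p-q)/p\cdot X'(v)$ on countermonotone rungs. Hence even if the $(x+y)$–projections tile $[-r,r]$ exactly once, the density of $X+Y$ is $1/(2(p+q))$ on values coming from comonotone rungs and $1/(2(p-q))$ on values coming from countermonotone rungs, which is not constant unless $q=0$ or one rung type is absent (the degenerate cases $r=p\pm q$). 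Concretely, in the equilateral case $p=q=r$ your budget gives $a=b=p$, but on any countermonotone rung $X+Y$ is constant, so its projection is a point and cannot tile anything; the correct construction there (e.g.\ $Y=X+p\bmod 2p$) uses only comonotone slope and covers $[-r,r]$ with multiplicity \emph{two}.

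So the two–slope staircase with a single path speed cannot deliver all three uniform margins simultaneously outside the boundary of $\mathbb{W}_3$. To make your 2D route work you would need either a third slope, or different linear densities on the two rung types together with overlapping projections (constant multiplicity $>1$) in some coordinate; the paper's triangle construction is essentially the three–slope version of this, and its linear system for $(m_{12},m_{23},m_{31})$ is what replaces your budget equations.
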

\begin{proof}
 For convenience, define
 \begin{equation}\label{zs}
  z_1:=\frac{w_2+w_3-w_1}{2w_2},\;\, z_2:=\frac{w_3+w_1-w_2}{2w_3}\quad \hbox{and} \;\, z_3:=\frac{w_1+w_2-w_3}{2w_1},\;\,\hbox{for some}\;\, (w_1, w_2, w_3)\in \Real_+^3.
 \end{equation}
 and denote $\overrightarrow{u}\in \Phi(\overrightarrow{w})$ if
 \[
  u_1\,w_1+u_2\,w_2+u_3\,w_3=\frac{w_1+w_2+w_3}{2}.
 \]
Now, let us consider the following three points
\[
  \overrightarrow{p_1}:=(1,z_1,0),\quad \overrightarrow{p_2}:=(0,1,z_2)\quad \hbox{and} \quad \overrightarrow{p_3}:=(z_3,0,1),
\]
and observe that the points satisfy $\overrightarrow{p_i}\in \Phi(\overrightarrow{w})$ for $i\in\{1,2,3\}$.
Hence any point on the line that connects $\overrightarrow{p_i}$ and $\overrightarrow{p_j}$ is again in $\Phi(\overrightarrow{w})$: i.e.
\begin{equation}\label{my.star}
 t\overrightarrow{p_i}+(1-t)\overrightarrow{p_j}\in \Phi(\overrightarrow{w})
\end{equation}
for any $0\le t\le 1$ and $i,j\in\{1,2,3\}$.
Further, by the assumption $\overrightarrow{w}\in\mathbb{W}_3$, the following inequalities can be derived
\[
 0\le z_i\le 1 \quad\hbox{for}\quad i\in\{1,2,3\},
\]
which in turn implies
\begin{equation}\label{bp}
t\overrightarrow{p_i}+(1-t)\overrightarrow{p_j}\in [0,1]^3
\end{equation}
for any $0\le t\le 1$ and $i,j\in\{1,2,3\}$.
Note that the trace of \eqref{bp} is triangular in $[0,1]^3$ with vertices lying on $\overrightarrow{p_1}$, $\overrightarrow{p_2}$ and $\overrightarrow{p_3}$.

Now for the given triangle with vertices $\overrightarrow{p_1}$, $\overrightarrow{p_2}$ and $\overrightarrow{p_3}$,
we give positive weights $m_{1,2}$, $m_{2,3}$, $m_{3,1}$ to each edge $\overline{p_1p_2}$, $\overline{p_2p_3}$ and $\overline{p_3p_1}$
such that weights are uniformly distributed on each edge.
Here we assume that that the sum of weights is given as
$m_{1,2}+m_{2,3}+m_{3,1}=1$,
so that the weights on the edges of the triangle define
a random vector $\overrightarrow{X}=(X_1, X_2, X_3)$.
In defining $H$ as the cumulative distribution function of the random vector $\overrightarrow{X}$,
our goal is to show that there exist the weights $(m_{1,2},\,m_{2,3},\,m_{3,1})\in\mathbb{R}_+^3$ which make $H$ a copula.

To show that $H$ is a copula, it is enough to show that $H$ is $2$-increasing and that the marginals of $H$ are a uniform$[0,1]$ distribution \citep{Nelson}.
Since $H$ is defined by the nonnegative weights $m_{12}, m_{23}, m_{31}$ that are distributed on the edges of the triangular,
it is obvious that $H$ is $2$-increasing.
Now, it remains to show that marginals of $H$ are a uniform$[0,1]$ distribution.
Since weights $m_{12}, m_{23}, m_{31}$ are uniformly distributed on each edge, it is enough
to check uniformity on each vertex of the triangle, which is equivalent to show
\begin{equation}\label{3eq}
 \P{X_1\le z_1}=z_1,\quad\P{X_2\le z_2}=z_2\quad\hbox{and}\quad \P{X_3\le z_3}=z_3.
\end{equation}
Each equation in \eqref{3eq} is equivalent with
\begin{equation}
 \begin{aligned}\label{3eq2}
   z_1&= m_{12}\cdot 0+m_{23} \,z_1  + m_{31}\cdot 1,\\
  z_2&= m_{12}\cdot 1 + m_{23} \cdot 0 + m_{31}\,z_2 ,\\
  z_3&= m_{12}\,z_3 + m_{23}\cdot 1  + m_{31}\cdot 0 ,\\
 \end{aligned}
\end{equation}
respectively.
The solution of \eqref{3eq2} is
\begin{equation}\label{3sol}
 \begin{aligned}
 m_{12}=\frac{(1-z_1)(1-z_2)(1-z_3)-(1-z_1)(1-z_2)+(1-z_1)} {(1-z_1)(1-z_2)(1-z_3)+1},\\
 m_{23}=\frac{(1-z_1)(1-z_2)(1-z_3)-(1-z_2)(1-z_3)+(1-z_2)} {(1-z_1)(1-z_2)(1-z_3)+1},\\
 m_{31}=\frac{(1-z_1)(1-z_2)(1-z_3)-(1-z_3)(1-z_1)+(1-z_3)} {(1-z_1)(1-z_2)(1-z_3)+1}.
 \end{aligned}
\end{equation}
While $m_{1,2}+m_{2,3}+m_{3,1}\neq 1$ for general $(z_1, z_2, z_3)\in[0,1]^3$,
a tedious but straightforward calculation shows that, with $(z_1,\,z_2,\,z_3)$ defined in \eqref{zs},
$(m_{1,2}, m_{2,3}, m_{3,1})$ defined in \eqref{3sol} always satisfies $$m_{1,2}+ m_{2,3}+ m_{3,1}=1.$$
Finally, \eqref{my.star} derives that $\overrightarrow{X}$ is $\overrightarrow{w}$-CM.

\end{proof}

 While $(z_1,z_2,z_3)$ is some vector in $[0,1]^3$, it is worth mentioning that the definition \eqref{zs} is crucial
 to guarantee that the solution $(m_{1,2}, m_{2,3}, m_{3,1})$ of \eqref{3eq} satisfies
 \[
  m_{1,2}+ m_{2,3}+ m_{3,1}=1.
 \]
In other words, $(m_{1,2}, m_{2,3}, m_{3,1})$, which satisfies \eqref{3eq} may not satisfy
 \[
  m_{1,2}+ m_{2,3}+ m_{3,1}=1
 \]
for and arbitrarily given $(z_1, z_2, z_3)\in[0,1]^3$ that does not satisfy the condition \eqref{zs}.
For example, for the arbitrarily given $(z_1, z_2, z_3)=(0.5, 0.3, 0.2)$, the solution $(m_{1,2}, m_{2,3}, m_{3,1})$ of \eqref{3eq}  defined in \eqref{3sol} has
 \[
  m_{1,2}+ m_{2,3}+ m_{3,1}>1.
 \]
The following lemma is an extension of Lemma \ref{ex.thm0} into multivariate dimensions $d\ge3$.
\begin{lemma}\label{ex.thm}
For given $\overrightarrow{w}\in\Real_+^d$, if there exist disjoint subsets $A$, $B$ and $C$ of $\{w_1, \cdots, w_d\}$ such that
 \begin{equation}\label{cond1}
  \left(\sum\limits_{w_i\in A} w_i, \sum\limits_{w_i\in B} w_i, \sum\limits_{w_i\in C} w_i \right) \in\mathbb{W}_3\quad\hbox{and}\quad A\cup B\cup C =\left\{w_1, \cdots, w_d \right\},
 \end{equation}
then there exists a random vector $\overrightarrow{U}$ whose marginals are uniform[$0,1$] and it satisfies
 \begin{equation}\label{cond2}
 \sum\limits_{i=1}^{d}w_iU_i=\frac{\sum_{i=1}^{d}w_i}{2}.
 \end{equation}
\end{lemma}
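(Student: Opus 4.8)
The plan is to reduce the $d$-dimensional problem to the three-dimensional construction already established in Lemma~\ref{ex.thm0}, by collapsing each of the blocks $A$, $B$, $C$ to a single coordinate, solving the problem in that reduced picture, and then ``blowing the solution back up'' by duplicating coordinates within each block. Concretely, I would first set
\[
 w_A:=\sum_{w_i\in A}w_i,\qquad w_B:=\sum_{w_i\in B}w_i,\qquad w_C:=\sum_{w_i\in C}w_i.
\]
Since $\mathbb{W}_3\subseteq\Real_+^3$, hypothesis \eqref{cond1} forces each of these block sums to be strictly positive, so $A$, $B$, $C$ are all nonempty, and $(w_A,w_B,w_C)\in\mathbb{W}_3$. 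Lemma~\ref{ex.thm0} then yields a three-variate random vector $(Y_1,Y_2,Y_3)$ with uniform$[0,1]$ marginals such that
\[
 w_A\,Y_1+w_B\,Y_2+w_C\,Y_3=\frac{w_A+w_B+w_C}{2}
\]
with probability one.

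Next I would define $\overrightarrow{U}=(U_1,\dots,U_d)$ on the same probability space by copying the appropriate $Y$-coordinate into every index of a block: $U_i:=Y_1$ when $w_i\in A$, $U_i:=Y_2$ when $w_i\in B$, and $U_i:=Y_3$ when $w_i\in C$. Each $U_i$ is then a copy of one of $Y_1,Y_2,Y_3$ and is therefore uniform$[0,1]$, so $\overrightarrow{U}$ has the required marginals. Regrouping the weighted sum according to the blocks and using $\sum_{w_i\in A}w_i=w_A$ (and similarly for $B$ and $C$) gives
\[
 \sum_{i=1}^d w_i\,U_i
 = w_A\,Y_1+w_B\,Y_2+w_C\,Y_3
 = \frac{w_A+w_B+w_C}{2}
 = \frac12\sum_{i=1}^d w_i
\]
with probability one, which is precisely \eqref{cond2}. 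This finishes the argument.

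I do not expect a genuine obstacle here: essentially all the work is already contained in Lemma~\ref{ex.thm0}, and the present statement is a combinatorial repackaging whose only content is the observation that once three coordinates obeying a weighted linear constraint with the aggregated weights are in hand, each of them may be duplicated arbitrarily many times without disturbing the uniformity of the marginals or the value of the weighted sum. The one mild point worth flagging is the degenerate situation in which one would want an empty block; since $\mathbb{W}_3\subseteq\Real_+^3$ this cannot occur under \eqref{cond1}, but were it allowed, the $\mathbb{W}_3$ condition would force the two remaining block sums to coincide and the construction would collapse to the bivariate countermonotonic coupling of Theorem~\ref{thm.counter.1}.
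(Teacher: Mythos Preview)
Your proposal is correct and follows essentially the same approach as the paper: both arguments reduce to the three-dimensional case of Lemma~\ref{ex.thm0} by making the $U_i$'s within each block $A$, $B$, $C$ comonotonic (hence equal, since the marginals are uniform$[0,1]$), so that the weighted sum collapses to $w_AY_1+w_BY_2+w_CY_3$. Your write-up is in fact more explicit than the paper's terse sketch, and your remark that \eqref{cond1} forces all three blocks to be nonempty is a useful clarification.
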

\begin{proof}
Let $\overrightarrow{U}$ be random vectors with marginals being uniform[0,1].
Further, let
\[
\begin{aligned}
 V_1:=\sum\limits_{i\in A} U_i, \quad
 V_2:=\sum\limits_{i\in B} U_i \quad\hbox{and}\quad
 V_3:=\sum\limits_{i\in C} U_i.
\end{aligned}
\]
Now the proofs are trivial if we set $U_i$'s in the same subset as being comonotonic
i.e. $U_i$ and $U_j$ are comonotonic if either $i,j\in A$, $i,j\in B$ or $i,j\in C$.
\end{proof}

 The following lemma provides the equivalence condition of \eqref{cond1}, which is more intuitive and easy to verify.
\begin{lemma}\label{contra}
For the given weight $\overrightarrow{w}\in\Real_+^d$, we have the following inequality
\begin{equation}\label{contra.1}
\max\{w_1, \cdots, w_d\}\le \sum\limits_{i=1}^d w_i-\max\{w_1, \cdots, w_d\}
\end{equation}
if and only if there exist disjoint subsets $A$, $B$ and $C$ of $\{w_1, \cdots, w_d\}$ satisfying \eqref{cond1}.
\end{lemma}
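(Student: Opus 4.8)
The plan is to prove the two directions of the equivalence separately, with the substantive content lying in the "only if" direction.

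For the "if" direction, suppose disjoint subsets $A,B,C$ satisfy \eqref{cond1}. Write $a=\sum_{w_i\in A}w_i$, $b=\sum_{w_i\in B}w_i$, $c=\sum_{w_i\in C}w_i$, so $(a,b,c)\in\mathbb{W}_3$ means $a+b+c\ge 2\max\{a,b,c\}$. Since $A\cup B\cup C=\{w_1,\dots,w_d\}$ and the sets are disjoint, we have $a+b+c=\sum_{i=1}^d w_i$. Now $\max_i w_i$ lies in exactly one of the three blocks, say $A$ without loss of generality, so $\max_i w_i\le a\le\max\{a,b,c\}\le \tfrac12(a+b+c)=\tfrac12\sum_i w_i$, and rearranging gives \eqref{contra.1}. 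This direction is essentially bookkeeping.

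For the "only if" direction, assume \eqref{contra.1}; we must construct disjoint $A,B,C$ partitioning $\{w_1,\dots,w_d\}$ with block-sums in $\mathbb{W}_3$. Let $S=\sum_{i=1}^d w_i$ and let $w_{(1)}\ge w_{(2)}\ge\cdots\ge w_{(d)}$ be the weights in decreasing order; \eqref{contra.1} says $w_{(1)}\le S-w_{(1)}$, i.e. $w_{(1)}\le S/2$. The idea is to put the largest weight $w_{(1)}$ alone (or as the seed) in block $A$, and then greedily distribute the remaining weights $w_{(2)},\dots,w_{(d)}$ among the three blocks so as to keep the block-sums balanced — at each step adding the next weight to the currently-smallest block. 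I would then show the resulting block-sums $a\ge b\ge c$ satisfy $a\le b+c$ (equivalently $a\le S/2$), which is exactly the $\mathbb{W}_3$ condition. The key inequality to verify is that after the greedy allocation, no single block exceeds $S/2$: the block containing $w_{(1)}$ has sum at least $w_{(1)}$, but one must argue it is never forced above $S/2$. A clean way is to process weights in decreasing order, always appending to the smallest running block-sum; then when a weight $w$ is added to a block, that block was the smallest of the three beforehand, so its pre-addition sum was at most $\tfrac13$ of the total allocated so far, hence at most $(S-w)/3$... but this bound alone does not immediately give $\le S/2$ once we add $w$, since $w$ itself could be large. The saving observation is that $w\le w_{(1)}\le S/2$ handles the very first assignment, and for every subsequent assignment the block being augmented had the minimum sum, so its new sum is at most the average of the other two plus, which one bounds below $S/2$ using \eqref{contra.1}; I would formalize this by induction on the number of weights placed, maintaining the invariant "$\max$ block-sum $\le \max\{w_{(1)},\ \tfrac12(\text{total placed so far})\}\le S/2$."

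I expect the main obstacle to be precisely this invariant: making the greedy-balancing argument airtight when one weight is very large relative to the others (the extreme case $w_{(1)}=S/2$, where $w_{(1)}$ must sit alone in its block and the remaining weights must exactly fill the other two blocks to total $S/2$). An alternative, possibly cleaner route avoiding delicate induction: observe that the condition \eqref{contra.1} is scale-invariant and, viewing the $w_i$ as interval lengths, note that a multiset of positive reals can be partitioned into three groups whose sums form a (possibly degenerate) triangle if and only if the largest element does not exceed the sum of the rest — which is a known combinatorial fact that I would either cite or prove by the greedy argument above. I would present the greedy construction explicitly since it also yields, via Lemma \ref{ex.thm}, an actual construction of the $\overrightarrow{w}$-CM copula, which is the point of the section.
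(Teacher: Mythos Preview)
Your ``if'' direction is fine and matches the paper's one-line dismissal of that implication.

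For the ``only if'' direction your greedy scheme can be pushed through, but as you yourself note the invariant is delicate and you have not actually closed it. The paper avoids all of this with a much simpler explicit partition: after sorting $w_1\ge w_2\ge\cdots\ge w_d$, take
\[
A=\{w_1\},\qquad B=\{w_i: i\ge 2,\ i\ \text{even}\},\qquad C=\{w_i: i\ge 3,\ i\ \text{odd}\},
\]
with block sums $w_1$, $w_2^*:=\sum_{i\ \text{even}}w_i$, $w_3^*:=\sum_{i\ge 3,\ \text{odd}}w_i$. Two of the three triangle inequalities are then immediate from the ordering, since
\[
(w_1-w_2)+(w_3-w_4)+\cdots\ge 0\quad\text{and}\quad w_1+(w_2-w_3)+(w_4-w_5)+\cdots\ge 0,
\]
giving $w_1+w_3^*\ge w_2^*$ and $w_1+w_2^*\ge w_3^*$; the third inequality $w_2^*+w_3^*\ge w_1$ is exactly the hypothesis \eqref{contra.1}. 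No induction, no invariant, no case analysis.

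So your approach is not wrong, but it is a genuinely different and heavier route. The payoff you mention---that the construction feeds directly into Lemma~\ref{ex.thm}---is equally delivered by the odd/even split, and more transparently. If you want to keep the greedy argument, you must actually prove the invariant; the cleanest way is to observe that once all blocks are nonempty (after three steps), any subsequent weight $w$ added to the minimum block satisfies $w\le S/4$ (since it is at least the fourth-largest weight), and combine this with $a-w\le\min(b,c)$ to get $3a\le S+2w\le \tfrac{3}{2}S$. But the alternating partition makes all of this unnecessary.
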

\begin{proof}
First observe that \eqref{cond1} implies \eqref{contra.1} is trivial. Hence it remains to show \eqref{contra.1} implies \eqref{cond1}.
Without loss of generality, let $w_1\ge\cdots\ge w_d$.
 For any integer $d\ge 3$, define
 \[
  w_2^*:=\sum\limits_{i\in \mathbb{Z}_O}w_i \quad\hbox{and}\quad w_3^*:=\sum\limits_{i\in \mathbb{Z}_E}w_i
 \]
 where $\mathbb{Z}_O:=\{w_i\big\vert i\neq 1 \hbox{ and $i\le d$ is odd number} \}$ and
 $\mathbb{Z}_E:=\{w_i\big\vert \hbox{$i\le d$ is even number} \}$.
 Then it is straightforward to show that
 \begin{equation}\label{contra.2}
  w_1+w_2^*\ge w_3^* \quad\hbox{and}\quad w_1+w_3^*\ge w_2^*
 \end{equation}
 Hence, along with \eqref{contra.2}, if we assume that $\overrightarrow{w}$ satisfies \eqref{contra.1}, we can conclude $(w_1, w_2^*, w_3^*)\in \mathbb{W}_3$,
 which in turn implies \eqref{cond1} with $A=\{w_1\}$, $B=\{w_i\big\vert i\in\mathbb{Z}_O \}$ and $C=\{w_i\big\vert i\in\mathbb{Z}_E \}$.
\end{proof}

So far in Lemma \ref{ex.thm} and Lemma \ref{contra}, we have provided sufficient conditions for the existence
$\overrightarrow{w}$-CM copula.
Then the natural question is to check whether they are also necessary conditions or not. The following corollary shows that
the condition in \eqref{contra.1} is also a necessary condition for the existence of $\overrightarrow{w}$-CM copulas.

\begin{corollary}\label{cor.wcm}
For the given weight $\overrightarrow{w}\in\mathbb{R}_+^d$, there exists random vector $\overrightarrow{w}$-CM random vector $\overrightarrow{U}$
if and only if $\overrightarrow{w}$ satisfies
\begin{equation}\label{cond.min}
\max\{w_1, \cdots, w_d\}\le \sum\limits_{i=1}^d w_i-\max\{w_1, \cdots, w_d\}.
\end{equation}
\end{corollary}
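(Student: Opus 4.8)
The plan is to treat the two implications separately, since the ``if'' direction is essentially free. For the ``if'' direction I would simply observe that \eqref{cond.min} is word-for-word the inequality \eqref{contra.1}, invoke Lemma \ref{contra} to obtain disjoint subsets $A,B,C$ of $\{w_1,\dots,w_d\}$ satisfying \eqref{cond1}, and then apply Lemma \ref{ex.thm} to produce a random vector $\overrightarrow{U}$ with uniform$[0,1]$ marginals satisfying \eqref{cond2}; by Definition \ref{new.def} that vector is $\overrightarrow{w}$-CM, so a $\overrightarrow{w}$-CM copula exists.

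The real content is the ``only if'' direction, and here I would argue by an essential-supremum bound. Assume $\overrightarrow{U}$ is $\overrightarrow{w}$-CM and, relabelling coordinates, take $w_1=\max_i w_i$; write $S=\sum_i w_i$. From the defining identity \eqref{eq.strict}, almost surely $w_1 U_1 = S/2 - \sum_{i\ge 2} w_i U_i \le S/2$, because every term $w_i U_i$ with $i\ge 2$ is nonnegative. Hence $\esssup(w_1 U_1)\le S/2$. Since $U_1$ is uniform$[0,1]$ its essential supremum is $1$, so $\esssup(w_1 U_1)=w_1$; comparing the two gives $w_1\le S/2$, which rearranges to $\max_i w_i \le \sum_i w_i - \max_i w_i$, i.e.\ \eqref{cond.min}. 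I would devote one sentence to the only delicate point — that $U_1$ genuinely takes values arbitrarily close to $1$ with positive probability, which is exactly what ``$\esssup U_1=1$'' encodes; an $\epsilon$-version of the same estimate (the event $\{U_1>1-\epsilon\}$ has positive probability, on which $0\le \sum_{i\ge2} w_i U_i < S/2 - w_1(1-\epsilon)$, then let $\epsilon\downarrow 0$) could be substituted if a more elementary phrasing is preferred.

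I do not expect a genuine obstacle in this corollary: the substantive work — the triangle-of-masses construction — has already been carried out in Lemma \ref{ex.thm0} and lifted to general $d$ in Lemma \ref{ex.thm} and Lemma \ref{contra}. The only things to be careful about are bookkeeping: making sure the ``if'' direction chains the two lemmas in the right order, and that the necessity argument actually uses the uniform-marginal hypothesis (without it the claim is false — e.g.\ a degenerate marginal would let $w_1U_1$ be deterministic and the inequality could fail). If desired I would append a short remark that the computation also shows the bound is sharp, equality in \eqref{cond.min} corresponding to a degenerate triangle in $\mathbb{W}_3$, a case still covered by the construction of Lemma \ref{ex.thm0}.
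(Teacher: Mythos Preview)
Your argument is correct. The ``if'' direction is handled exactly as in the paper, by chaining Lemma~\ref{contra} and Lemma~\ref{ex.thm}. For the ``only if'' direction, however, your route differs from the paper's. The paper argues by a variance comparison: assuming $w_1>\sum_{i\ge 2}w_i$, the identity $w_1U_1=\tfrac{S}{2}-\sum_{i\ge 2}w_iU_i$ would force $\Var{w_1U_1}=\Var{\sum_{i\ge 2}w_iU_i}$, but $\Var{w_1U_1}=w_1^2/12$ strictly exceeds the comonotonic upper bound $(\sum_{i\ge 2}w_i)^2/12$ for the right-hand variance, a contradiction. Your essential-supremum argument instead bounds $w_1U_1\le S/2$ almost surely and reads off $w_1=\esssup(w_1U_1)\le S/2$ directly from the uniform marginal. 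Your approach is slightly more elementary in that it avoids invoking the comonotonic variance bound, while the paper's version dovetails with the variance-minimization theme developed in Section~\ref{optim}; both exploit the uniform-marginal hypothesis in an essential way, as you rightly note.
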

\begin{proof}
It is enough to show that $\overrightarrow{w}$-CM implies \eqref{cond.min}.
 First, consider a weight $(w_1, \cdots, w_d)\in\Real_+^d$ such that
one weight, say $w_1$, is greater than the sum of all other weights
\begin{equation}\label{cond.min1}
w_1> \sum\limits_{i=2}^d w_i.
\end{equation}
Then, it is obvious that there does not exist any random vector $\overrightarrow{U}$ whose marginals are uniform[$0,1$] and satisfies \eqref{eq.strict}:
this can be easily verified using the following variance comparison;
\[
 \Var{w_1U_1} > \Var{\sum\limits_{i=2}^{d}w_iU_i}.
\]
Hence, we can conclude that there does not exist $\overrightarrow{w}$-CM copula under the condition \eqref{cond.min1}, which concludes the claim.

\end{proof}

   \begin{figure}[h!]
    \centering
      \includegraphics[width=0.49\textwidth]{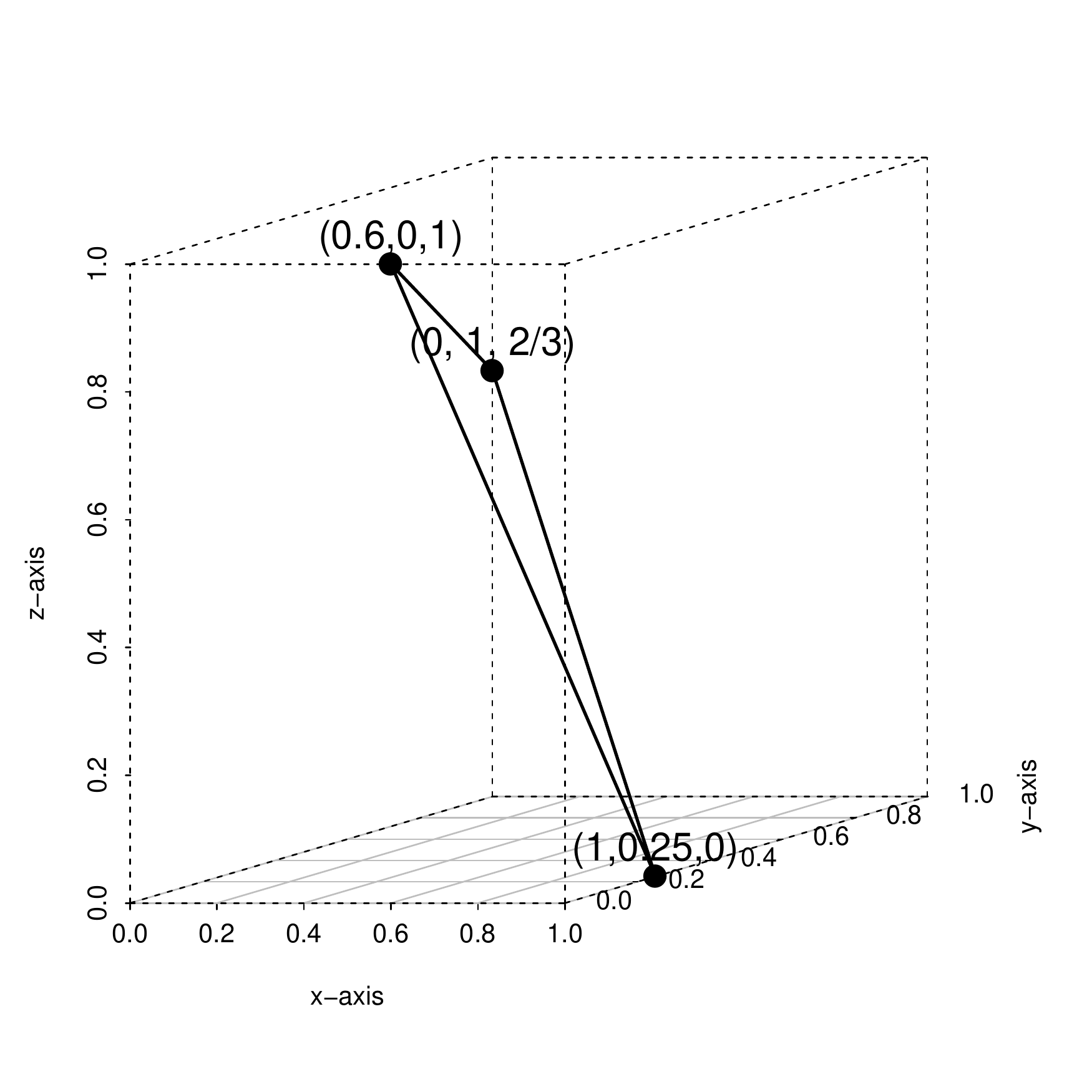}
      \caption{$w$-CM Copula with $(w_1, w_2, w_3)=(5,4,3)$}
      \label{figure.100}
  \end{figure}

 \begin{remark}\label{exit.rem}
  For any $\overrightarrow{w}\in \Real_+^d$ satisfying \eqref{cond.min}, the choice of $\overrightarrow{w}$-CM copula is not unique.
For example, in the proof of Lemma \ref{ex.thm0}, we show how to construct $\overrightarrow{w}$-CM copula $C$ for $d=3$.
On the other hand, the construction method in the proof of Lemma \ref{ex.thm0} and the following choices of $z_1^*,\,z_2^*,\, z_3^*\in [0,1]$ defined as
 \begin{equation*}
  z_1^*:=\frac{w_2+w_3-w_1}{2w_3},\;\, z_2^*:=\frac{w_3+w_1-w_2}{2w_1}\quad \hbox{and} \;\, z_3^*:=\frac{w_1+w_2-w_3}{2w_2},
 \end{equation*}
and three points $\overrightarrow{p_1^*},\, \overrightarrow{p_2^*},\, \overrightarrow{p_3^*}\in [0,1]^3$ defined as
\[
  \overrightarrow{p_1^*}:=(1,0,z_1),\quad \overrightarrow{p_2^*}:=(z_2,1,0)\quad \hbox{and} \quad \overrightarrow{p_3^*}:=(0,z_3,1),
\]
will derive another choice of $\overrightarrow{w}$-CM copula $C^*$ with $C\neq C^*$.

 \end{remark}

In the bivariate case, Corollary \ref{cor.wcm} concludes that $(U_1, U_2)$ being $\overrightarrow{w}$-CM implies that $w_1=w_2$
which coincides with the concept of countermonotonicity as we already mentioned in Section \ref{sec.3}.
The following example shows the numerical example of the construction of $\overrightarrow{w}$-CM copula
using the logic in the proof of Lemma \ref{ex.thm0}.

\begin{example}
Let $(w_1, w_2, w_3)=(5,4,3)$.
Since $$5=w_1\le w_2+w_3=4+3,$$ we know that $\overrightarrow{w}\in\mathcal{W}_3$ and, by Corollary \ref{cor.wcm},
there exists a $\overrightarrow{w}$-CM random vector $\overrightarrow{U}\in\mathcal{F}_3$. Using the techniques used in
\eqref{ex.thm0}, we can construct $\overrightarrow{w}$-CM random vector $\overrightarrow{U}$ having mass
$m_{12}$, $m_{23}$ and $m_{31}$ uniformly distributed on the each edge $\overrightarrow{p_1p_2}$, $\overrightarrow{p_2p_3}$ and
$\overrightarrow{p_3p_1}$, respectively. Here
\[
 \overrightarrow{p_1}=\left(1, \frac{1}{4},0\right), \quad \overrightarrow{p_2}=\left(0, 1,\frac{2}{3}\right)\quad \hbox{and}\quad \overrightarrow{p_3}=\left(\frac{3}{5} ,0 ,1\right)
\]
and
\begin{equation}\label{mass.m}
 m_{12}=\frac{6}{11},\quad m_{23}=\frac{3}{11}\quad\hbox{and}\quad m_{31}=\frac{2}{11}
\end{equation}
Hence, for example, we have
\[
\begin{aligned}
 C\left(1, 0.25,1/3\right)&=1/3*m_{31}\\
 &=\frac{2}{33}.
\end{aligned}
\]
Finally, Figure \ref{figure.100} shows the support of random vector $\overrightarrow{U}$.

\end{example}

\section{Application to the Variance Minimization Problem}\label{optim}

Finding the maximum and the minimum of variance in the aggregated sum with given marginal distributions is the classical optimization problem
\begin{equation}\label{var.eq}
 \Var{\sum\limits_{i=1}^{d}X_i},\quad\hbox{for given $X_i\sim F_i$}, \quad i=1, \cdots, d.
\end{equation}
First of all, the maximization of \eqref{var.eq} is straightforward using the comonotonic random vectors.
For the minimization problem with $d=2$, the answer is trivial with countermonotonic random variables.
Regarding general dimensions $d\ge 3$, minimization of \eqref{var.eq} was solved for some cases of marginal distributions  \citep{Ruschendorf, Ruschendorf2, Ruodu, Ruodu3}.
However, minimization of \eqref{var.eq} is not easy in general for $d\ge3$.
The following remark, which can be easily derived from Theorem 2.7 of \citet{Dhaene3}, states that variance minimization problems are related with concordance ordering,
which may offer some hints in the minimization of \eqref{var.eq}.

\begin{remark}\label{order.var}
Let $F_1, \cdots, F_d$ be distribution functions having finite variances.
 If $$\overrightarrow{X^*}, \, \overrightarrow{X}\in\mathcal{F}_d(F_1, \cdots, F_d)$$ with $C^*\prec C$, then
 \[
  \Var{\sum\limits_{i=1}^{d}X_i^*}\le  \Var{\sum\limits_{i=1}^{d}X_i}.
 \]
\end{remark}

From the remark, it is clear that the minimization and maximization of \eqref{var.eq}
is related with a minimum and maximum copula.
While maximization of \eqref{var.eq} is related with the comonotonic copula,
due to the absence of the minimum copula for $d\ge3$, the minimization of \eqref{var.eq} is related with
the set of the minimal copulas.
Of course, the choice of the proper set of minimal copulas depends on the marginal distributions.
Among many other choices of the marginal distributions in \eqref{var.eq},
this paper considers the uniform marginal distributions as shown in the following definition, which may be the simplest versions of \eqref{var.eq}.
The following assumption is useful to simplify the notation in several theorems in this section.
\begin{assumption}\label{assm.1}
Assume that $w_1= \max\{w_1, \cdots, w_d\}$.
\end{assumption}

\begin{definition}\label{var.min}
 For given $\overrightarrow{w}\in\Real_+^d$, define
 \begin{equation}\label{min.eq}
  m_-(\overrightarrow{w}):=\inf\left\{ \Var{\sum\limits_{i=1}^{d} \widetilde{U_i}}\bigg\vert \,\widetilde{U_i}\,\,\hbox{is uniform$[0,w_i]$ random variables, }\,i=1,\cdots,d\right\}
 \end{equation}
 and
  \[
  m_+(\overrightarrow{w}):=\sup\left\{ \Var{\sum\limits_{i=1}^{d} \widetilde{U_i}}\bigg\vert \,\widetilde{U_i}\,\,\hbox{is uniform$[0,w_i]$ random variables, }\,i=1,\cdots,d\right\}.
 \]
where $\widetilde{U_i}\,\,\hbox{is uniform$[0,w_i]$ random variables for }i=1,\cdots,d$.

\end{definition}
Equivalently, $m_-(\overrightarrow{w})$ and $m_+(\overrightarrow{w})$ can be written as
 \[
  m_-(\overrightarrow{w})=\inf\left\{ \Var{\sum\limits_{i=1}^{d} w_iU_i}\bigg\vert \,\overrightarrow{U}\in \mathcal{F}_d\right \}
 \]
 and
  \[
  m_+(\overrightarrow{w})=\sup\left\{ \Var{\sum\limits_{i=1}^{d} w_iU_i}\bigg\vert \,\overrightarrow{U}\in \mathcal{F}_d\right\}.
 \]
The upper bound
$$m_+(\overrightarrow{w})=\frac{1}{12}\left(\sum\limits_{i=1}^{d}w_i \right)^2$$
is achieved if and only if $\overrightarrow{U}$
is comonotonic \citep{Kaas, Dhaene5, Dhaene}.
Regarding the lower bound, when
\begin{equation}\label{lower.c}
2\,\max\{w_1, \cdots, w_d\} \le \sum\limits_{i=1}^{d} w_i,
\end{equation}
Corollary \ref{cor.wcm} concludes that
\begin{equation*}
m_-(\overrightarrow{w})=0.
\end{equation*}
However for $\overrightarrow{w}$ which does not satisfy \eqref{lower.c}, minimization is not straightforward.

For $\overrightarrow{w}\in\Real_+^d$ which does not satisfy \eqref{lower.c},
Theorem \ref{lem.m8} below finds the explicit expression for $m_-({\overrightarrow{w}})$.
More importantly, we also show that the minimum $m_-({\overrightarrow{w}})$ is achieved with $\overrightarrow{w^*}$-CM copulas even though
$\overrightarrow{w^*}\in\Real_+^d$ may not be the same as $\overrightarrow{w}$.
Finally, Corollary \ref{lem.m1} provides the complete solution for $m_-({\overrightarrow{w}})$ and $m_+({\overrightarrow{w}})$ for any given
$\overrightarrow{w}\in\Real_+^d$.
Before we examine the main results, it is convenient to present the following lemma and notations.

\begin{lemma}\label{lem.m}
  Let $\overrightarrow{w}\in\Real_+^d$ satisfy Assumption \ref {assm.1} and
 \begin{equation}\label{my.cond5}
2\,w_1=\sum\limits_{i=1}^{d}w_i.
 \end{equation}
 Then the following inequality holds
 \begin{equation}\label{my.cond}
  \cov{U_1, \sum\limits_{i=1}^{d}w_iU_i }\ge 0,
 \end{equation}
where the equality holds if and only if $\overrightarrow{U}$ is $\overrightarrow{w}$-CM.
\end{lemma}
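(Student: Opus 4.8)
The plan is to collapse everything onto one algebraic rewriting that exploits the borderline constraint \eqref{my.cond5}, which under Assumption \ref{assm.1} reads $w_1=\sum_{i=2}^{d}w_i$. Write $S:=\sum_{i=1}^{d}w_iU_i$. Since adding or subtracting a constant does not affect a covariance, and since \eqref{my.cond5} says the constant $\tfrac12\sum_{i=1}^d w_i$ equals $w_1$, the first step is to rewrite
\[
 S-w_1 \;=\; w_1(U_1-1)+\sum_{i=2}^{d}w_iU_i \;=\; \sum_{i=2}^{d}w_i\bigl(U_1+U_i-1\bigr),
\]
where the last equality substitutes $w_1=\sum_{i=2}^{d}w_i$. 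This is the only non-routine step; everything afterwards is bilinearity of covariance.

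Next I would expand, using $\cov{U_1,c}=0$ for constants $c$,
\[
 \cov{U_1,S}\;=\;\sum_{i=2}^{d}w_i\,\cov{U_1,\,U_1+U_i-1}\;=\;\sum_{i=2}^{d}w_i\bigl(\Var{U_1}+\cov{U_1,U_i}\bigr).
\]
Because every $U_j$ is uniform$[0,1]$ we have $\Var{U_i}=\Var{U_1}$, hence $\Var{U_1+U_i}=2\Var{U_1}+2\cov{U_1,U_i}$, i.e. $\Var{U_1}+\cov{U_1,U_i}=\tfrac12\Var{U_1+U_i}$. Substituting gives
\[
 \cov{U_1,S}\;=\;\frac12\sum_{i=2}^{d}w_i\,\Var{U_1+U_i}\;\ge\;0,
\]
which is \eqref{my.cond}.

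For the equality characterization: if $\overrightarrow{U}$ is $\overrightarrow{w}$-CM then by Definition \ref{new.def} the sum $S$ equals the constant $\tfrac12\sum_{i=1}^d w_i$ almost surely, so $\cov{U_1,S}=0$ trivially. Conversely, since each $w_i>0$, equality in the last display forces $\Var{U_1+U_i}=0$ for every $i\in\{2,\dots,d\}$, i.e. $U_1+U_i$ is almost surely a constant; taking expectations identifies that constant as $\E{U_1}+\E{U_i}=1$, so $U_i=1-U_1$ almost surely. Plugging back in and using $w_1=\sum_{i=2}^{d}w_i$ gives $\sum_{i=1}^{d}w_iU_i=w_1U_1+(1-U_1)\sum_{i=2}^{d}w_i=w_1=\tfrac12\sum_{i=1}^{d}w_i$ almost surely, which is exactly the $\overrightarrow{w}$-CM condition.

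The step to watch is the equality analysis: one needs positivity of all the weights to push equality down to each pair $(U_1,U_i)$, and then the precise statement \emph{``$U_1+U_i$ is a.s.\ constant''} --- rather than merely a two-sided bound on $\cov{U_1,U_i}$ --- is what lets one recover the exact linear relation defining $\overrightarrow{w}$-CM. The inequality itself is essentially immediate once the rewriting of $S-w_1$ is in hand.
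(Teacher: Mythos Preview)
Your argument is correct and, at its core, uses the same pairwise inequality as the paper: both proofs reduce to $\cov{U_1,U_i}\ge -\Var{U_1}$ for each $i\ge 2$ (valid because all marginals share the same variance), then sum with weights $w_i$ and invoke $w_1=\sum_{i\ge 2}w_i$. The paper phrases this as ``correlation $\ge -1$'' and keeps the $w_1\Var{U_1}$ term separate, while you distribute that term across $i\ge 2$ and repackage each summand as $\tfrac12\Var{U_1+U_i}$.

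The one genuine difference is in the equality analysis. The paper, after obtaining equality in the correlation bound, appeals to Corollary~\ref{cor.wcm} to close the ``if and only if''. Your route is more self-contained: $\Var{U_1+U_i}=0$ immediately gives $U_i=1-U_1$ a.s., from which the $\overrightarrow{w}$-CM identity follows by direct substitution. This buys you a cleaner equality case without invoking an external existence result, at no extra cost.
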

\begin{proof}
We first prove the inequality \eqref{my.cond} as
\begin{equation*}
 \begin{aligned}
  \cov{U_1, \sum\limits_{i=1}^{d}w_iU_i } &= \sum\limits_{i=1}^{d} \cov{U_1, w_iU_i }\\
                                          &= w_1\Var{U_1} + \sum\limits_{i=2}^{d} w_i {\rm corr}\left[ U_1, U_i \right] \sqrt{\Var{U_1}}\sqrt{\Var{U_i}}\\
                                          &\ge w_1\Var{U_1}-\sum\limits_{i=2}^{d} w_i \sqrt{\Var{U_1}}\sqrt{\Var{U_i}}\\
                                          &=\left[w_1-\sum\limits_{i=2}^{d} w_i \right]\Var{U_1}\\
                                          &=0
 \end{aligned}
\end{equation*}
where the inequality arises from the fact that correlation of any two random variables is greater than $-1$, and the last equality is from the condition
\eqref{my.cond5}.
Furthermore, since $\overrightarrow{w}$ satisfies the condition \eqref{cond.min}, Corollary \ref{cor.wcm}
concludes that the inequality in \eqref{my.cond} becomes equality if and only if
$\overrightarrow{U}$ is $\overrightarrow{w}$-CM.
\end{proof}

\begin{notation}\label{ref.ref}
For given $\overrightarrow{w}\in\Real_+^d$, define $\overrightarrow{w^*}:=(w_1^*, \cdots, w_d^*)$ as 
\begin{equation}\label{def.w}
 w_i^*
 :=\begin{cases}
                        w_i; & \hbox{ if }\;\; 2\, w_i\le \sum\limits_{j=1}^{d}w_j\\
                        \sum\limits_{j=1}^{d}w_i - w_i   ; & \hbox{ if }\;\; 2\, w_i> \sum\limits_{j=1}^{d}w_j
                       \end{cases}
\end{equation}
for $i=1,\cdots,d$. Then, one can easily confirm $\overrightarrow{w^*}\in\Real_+^d$.
Further, let
\[
\begin{aligned}
 l(\overrightarrow{w})
  &:=\frac{1}{12} \left[\left( 2\,\max\{w_1,\cdots, w_d\}-\sum\limits_{i=1}^{d}w_i\right)_+\right]^2.\\
\end{aligned}
\]

\end{notation}

Since there always exists $\overrightarrow{w}$-CM random vector $\overrightarrow{U}\in\mathcal{F}_d$ for
$\overrightarrow{w}\in\Real_+^d$, which satisfies the condition \eqref{cond.min},
we conclude $m_-(\overrightarrow{w})=0$ in this case.
The following proposition provides the tight bound of $m_-(\overrightarrow{w})$ when $\overrightarrow{w}$ does not
satisfy the condition \eqref{cond.min}. The main idea of the proof is to shrink the largest weight so that the new weights satisfy
the condition \eqref{cond.min}, which in turn results in constant summation or zero variance. Then Lemma \ref{lem.m} shows that only the remaining part of the largest weight contributes the lower bound of the variance specified in \eqref{min.eq}.
\begin{theorem}\label{lem.m8}
  Let the weight vector $\overrightarrow{w}\in\Real_+^d$ satisfy Assumption \ref{assm.1} and
  \begin{equation}\label{eq.pp}
  2\, w_1> \sum\limits_{i=1}^{d}w_i.
  \end{equation}
  Then the following inequality holds
 \begin{equation}\label{eq.m8}
  l(\overrightarrow{w})  \le \Var{\sum\limits_{i=1}^{d} w_iU_i} ,
 \end{equation}
where the inequality is attained if and only if\, $\overrightarrow{U}$ is $\overrightarrow{w^*}$-CM.
Furthermore, for any $\overrightarrow{w}\in \Real_+^d$ satisfying the condition \eqref{eq.pp}, there exists random vector
$\overrightarrow{U}\in\mathcal{F}_d$ which achieves the equality in \eqref{eq.m8}.

\end{theorem}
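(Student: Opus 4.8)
The plan is to peel off from the largest weight $w_1$ the part that obstructs \eqref{cond.min} and reduce the problem to Lemma \ref{lem.m}. Write $w_1 = w_1^* + \delta$ with
\[
 w_1^* := \sum_{i=2}^{d} w_i, \qquad \delta := 2w_1 - \sum_{i=1}^{d} w_i > 0,
\]
the strict positivity being exactly \eqref{eq.pp}. First I would check that $\overrightarrow{w^*}$ of Notation \ref{ref.ref} equals $(w_1^*, w_2, \dots, w_d)$: under Assumption \ref{assm.1} every $i \ge 2$ satisfies $2w_i \le w_i + w_1 \le \sum_{j=1}^{d} w_j$, so only the first coordinate is altered in \eqref{def.w}. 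Consequently $2w_1^* = \sum_{i=1}^{d} w_i^*$ and $w_1^* = \max\{w_1^*, \dots, w_d^*\}$, so $\overrightarrow{w^*}$ meets both Assumption \ref{assm.1} and the hypothesis \eqref{my.cond5}, and Lemma \ref{lem.m} applied to $\overrightarrow{w^*}$ gives
\[
 \cov{U_1, \sum_{i=1}^{d} w_i^* U_i} \ge 0,
\]
with equality if and only if $\overrightarrow{U}$ is $\overrightarrow{w^*}$-CM.

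Next I would use $\sum_{i=1}^{d} w_i U_i = \delta\, U_1 + \sum_{i=1}^{d} w_i^* U_i$ (valid since $w_1 = \delta + w_1^*$ and $w_i = w_i^*$ for $i \ge 2$) to expand
\[
 \Var{\sum_{i=1}^{d} w_i U_i} = \delta^2 \Var{U_1} + 2\delta\, \cov{U_1, \sum_{i=1}^{d} w_i^* U_i} + \Var{\sum_{i=1}^{d} w_i^* U_i}.
\]
Since $\Var{U_1} = \tfrac{1}{12}$, since $\delta > 0$, and since the covariance term is nonnegative by the previous step, discarding that term together with the nonnegative term $\Var{\sum_i w_i^* U_i}$ yields $\Var{\sum_i w_i U_i} \ge \delta^2/12 = l(\overrightarrow{w})$, where the last identity uses $2w_1 > \sum_i w_i$ and $w_1 = \max\{w_1,\dots,w_d\}$. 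This is \eqref{eq.m8}.

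For the equality statement, equality above forces $\Var{\sum_{i=1}^{d} w_i^* U_i} = 0$; taking expectations, $\sum_i w_i^* U_i = \tfrac{1}{2}\sum_i w_i^*$ almost surely, i.e. $\overrightarrow{U}$ is $\overrightarrow{w^*}$-CM. Conversely, if $\overrightarrow{U}$ is $\overrightarrow{w^*}$-CM then $\sum_i w_i^* U_i$ is almost surely constant, so both the covariance term and $\Var{\sum_i w_i^* U_i}$ vanish and equality holds; this proves the ``if and only if''. Finally, since $\max\{w_1^*,\dots,w_d^*\} = w_1^* = \sum_{i=2}^{d} w_i = \sum_i w_i^* - \max\{w_1^*,\dots,w_d^*\}$, the vector $\overrightarrow{w^*}$ satisfies \eqref{cond.min}, so Corollary \ref{cor.wcm} produces a $\overrightarrow{w^*}$-CM random vector $\overrightarrow{U} \in \mathcal{F}_d$, which by the equality characterization attains $l(\overrightarrow{w})$ — the required minimizer.

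The step I expect to require the most care is the bookkeeping in the first paragraph: confirming that $\overrightarrow{w^*}$ sits exactly on the boundary $2w_1^* = \sum_i w_i^*$ with $w_1^*$ still a maximal coordinate, so that Lemma \ref{lem.m} applies, while at the same time $\overrightarrow{w^*}$ satisfies \eqref{cond.min} so that a $\overrightarrow{w^*}$-CM copula (hence a minimizer) exists. The rest — the variance decomposition, dropping nonnegative terms, and the two directions of the equality characterization — is routine once this reduction is in place.
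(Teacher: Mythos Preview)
Your proposal is correct and follows essentially the same route as the paper: you split $w_1=\delta+w_1^*$, apply Lemma \ref{lem.m} to the boundary vector $\overrightarrow{w^*}$, expand the variance, and drop the two nonnegative terms; equality and existence are handled exactly as in the paper via Lemma \ref{lem.m} and Corollary \ref{cor.wcm}. If anything, your write-up is slightly more careful in explicitly checking that $w_1^*$ remains a maximal coordinate of $\overrightarrow{w^*}$ (so Assumption \ref{assm.1} holds for the application of Lemma \ref{lem.m}) and in including the factor $2$ in the covariance expansion.
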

\begin{proof}
 First, observe
$w_1-w_1^*>0$
for the given $\overrightarrow{w}$ satisfying the condition \eqref{eq.pp}.
Further we have
\begin{equation}\label{lem.m10}
\begin{aligned}
 2w_1^*&= 2\left( \sum\limits_{i=1}^{d}w_i-w_1\right)\\
       &= \left( \sum\limits_{i=1}^{d}w_i-w_1\right)+\sum\limits_{i=2}^{d}w_i\\
       &= \sum\limits_{i=1}^{d}w_i^*.
\end{aligned}
\end{equation}
 Since $w_1-w_1^*>0$, we have
\begin{equation}\label{ahnb2}
\begin{aligned}
 \Var{\sum\limits_{i=1}^{d}w_iU_i}&=\Var{(w_1-w_1^*)U_1}+\Var{w_1^*U_1+\sum\limits_{i=2}^{d}w_iU_i}\\
 &\quad\quad\quad\quad+\cov{(w_1-w_1^*)U_1,\;\; w_1^*U_1+\sum\limits_{i=2}^{d}w_iU_i}\\
 &=\Var{(w_1-w_1^*)U_1}+\Var{w_1^*U_1+\sum\limits_{i=2}^{d}w_i^*U_i}\\
 &\quad\quad\quad\quad+(w_1-w_1^*)\cov{U_1,\;\; w_1^*U_1+\sum\limits_{i=2}^{d}w_i^*U_i}\\
 &\ge \Var{(w_1-w_1^*)U_1}\\
 &=\frac{1}{12}(w_1-w_1^*)^2,
\end{aligned}
\end{equation}
where the last inequality is from \eqref{lem.m10} and  Lemma \ref{lem.m}. 
Furthermore, since variance of the random variable is $0$ if and only if the random variable is constant with probability $1$, Lemma \ref{lem.m} concludes that
the  inequality in \eqref{ahnb2} is equality if and only if $\overrightarrow{U}$ is
$\overrightarrow{w^*}$-CM.

Finally, since $\overrightarrow{w^*}$ satisfies the condition \eqref{lem.m10} (hence the condition \eqref{cond.min}),
there always exists $\overrightarrow{w^*}$-CM random vector
$\overrightarrow{U}\in \mathcal{F}_d$, which in turn implies that the equality in \eqref{eq.m8} can be always achieved.

\end{proof}

Based on Theorem \ref{lem.m8},
the following corollary provides the complete solution for the optimization problem in Definition \ref{var.min}.
\begin{corollary}\label{lem.m1}
  For the given $\overrightarrow{w}\in\Real_+^d$,
  the following inequality holds
 \begin{equation}\label{eq.m}
  l(\overrightarrow{w})  \le \Var{\sum\limits_{i=1}^{d} w_iU_i}
  \le \frac{1}{12}\left( \sum\limits_{i=1}^{d}w_i\right)^2.
 \end{equation}
The lower bound of \eqref{eq.m} is attained if and only if\, $\overrightarrow{U}$ is $\overrightarrow{w^*}$-CM and
the upper bound of \eqref{eq.m} is attained if and only if\, $\overrightarrow{U}$ is comonotonic.
\end{corollary}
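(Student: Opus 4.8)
The plan is to reduce to Assumption~\ref{assm.1} by symmetry, and then to split on whether $\overrightarrow{w}$ satisfies the balance condition~\eqref{cond.min}. For the reduction, observe that permuting the coordinates of $\overrightarrow{w}$ and of $\overrightarrow{U}$ simultaneously leaves $\Var{\sum\limits_{i=1}^{d}w_iU_i}$ unchanged, keeps $\overrightarrow{U}$ inside $\mathcal{F}_d$, and acts equivariantly on $l(\overrightarrow{w})$, on $\frac{1}{12}\left(\sum\limits_{i=1}^{d}w_i\right)^2$, on the map $\overrightarrow{w}\mapsto\overrightarrow{w^*}$, and on the properties of being $\overrightarrow{w^*}$-CM and of being comonotonic. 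Hence it is enough to prove the statement under $w_1=\max\{w_1,\cdots,w_d\}$, i.e.\ Assumption~\ref{assm.1}.

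For the lower bound, suppose first that $2w_1\le\sum\limits_{i=1}^{d}w_i$, i.e.\ condition~\eqref{cond.min} holds. Then $\left(2\max\{w_1,\cdots,w_d\}-\sum\limits_{i=1}^{d}w_i\right)_+=0$, so $l(\overrightarrow{w})=0$; moreover $2w_i\le\sum\limits_{j=1}^{d}w_j$ for every $i$, so Notation~\ref{ref.ref} gives $\overrightarrow{w^*}=\overrightarrow{w}$. The inequality $0\le\Var{\sum\limits_{i=1}^{d}w_iU_i}$ is trivial, and equality holds if and only if $\sum\limits_{i=1}^{d}w_iU_i$ is almost surely constant; since $\E{\sum\limits_{i=1}^{d}w_iU_i}=\frac12\sum\limits_{i=1}^{d}w_i$, that constant must equal $\frac12\sum\limits_{i=1}^{d}w_i$, which is precisely the defining identity of $\overrightarrow{w}$-CM in Definition~\ref{new.def}, the converse being immediate. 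Existence of a $\overrightarrow{w}$-CM vector in $\mathcal{F}_d$ is Corollary~\ref{cor.wcm}. In the complementary case $2w_1>\sum\limits_{i=1}^{d}w_i$, condition~\eqref{eq.pp} holds and Theorem~\ref{lem.m8} delivers $l(\overrightarrow{w})\le\Var{\sum\limits_{i=1}^{d}w_iU_i}$ with equality if and only if $\overrightarrow{U}$ is $\overrightarrow{w^*}$-CM, together with the existence of a minimizer in $\mathcal{F}_d$; here $l(\overrightarrow{w})=\frac{1}{12}\left(2w_1-\sum\limits_{i=1}^{d}w_i\right)^2=\frac{1}{12}(w_1-w_1^*)^2$, matching the quantity bounded there.

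For the upper bound I would invoke the comonotonic variance bound already recalled at the start of this section: with $X_i:=w_iU_i$ and $\Var{X_i}=w_i^2/12$, the classical inequality $\Var{\sum\limits_{i=1}^{d}X_i}\le\left(\sum\limits_{i=1}^{d}\sqrt{\Var{X_i}}\right)^2$, which becomes an equality precisely when the $X_i$ are pairwise (hence, for uniform marginals, jointly) comonotonic, gives $\Var{\sum\limits_{i=1}^{d}w_iU_i}\le\frac{1}{12}\left(\sum\limits_{i=1}^{d}w_i\right)^2$, attained if and only if $\overrightarrow{U}$ is comonotonic \citep{Kaas, Dhaene5, Dhaene}.

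Since the substantive work is carried by Theorem~\ref{lem.m8} and Corollary~\ref{cor.wcm}, the only step needing genuine care is the bookkeeping in the balanced case $2w_1\le\sum\limits_{i=1}^{d}w_i$: verifying that $l(\overrightarrow{w})$ vanishes, that $\overrightarrow{w^*}$ collapses to $\overrightarrow{w}$, and that zero variance is equivalent to the $\overrightarrow{w}$-CM identity via the mean computation. Everything else is a direct appeal to the two earlier results together with the permutation reduction, so I anticipate no real obstacle.
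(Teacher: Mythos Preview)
Your proof is correct and follows essentially the same two-case split as the paper's own argument, invoking Theorem~\ref{lem.m8} in the unbalanced case and Corollary~\ref{cor.wcm} plus nonnegativity of variance in the balanced case, with the comonotonic upper bound cited from the same sources. Your explicit permutation reduction to Assumption~\ref{assm.1} and your spelling out of $\overrightarrow{w^*}=\overrightarrow{w}$ and the mean computation in the balanced case are minor clarifications the paper leaves implicit, but the route is the same.
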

\begin{proof}

 The upper bound in \eqref{eq.m} is a classical result which can explained by comonotonic $\overrightarrow{U}$; see \citet{Kaas, Dhaene5, Dhaene} for details.
 Hence it is enough to show the lower bound
  \begin{equation}\label{eq.m10}
  \frac{1}{12} \left[\left( 2\,\max\{w_1, \cdots, w_d\}-\sum\limits_{i=1}^{d}w_i\right)_+\right]^2 \le \Var{\sum\limits_{i=1}^{d} w_iU_i}
 \end{equation}
 and the equality holds if and only if $\overrightarrow{U}$ is $\overrightarrow{w^*}$-CM.

 For the lower bound in \eqref{eq.m}, consider two cases depending on $\overrightarrow{w}$.
First, consider the following condition on $\overrightarrow{w}\in\Real_+^d$
\[
 2\, \max\{w_1, \cdots, w_d\}- \sum\limits_{i=1}^{d}w_i \le 0.
\]
Since $l(\overrightarrow{w})=0$
in this case, nonnegativeness of the variance shows the left inequality of \eqref{eq.m}.
Furthermore, Corollary \ref{cor.wcm} implies that
the equality in \eqref{eq.m} holds if and only if $\overrightarrow{U}$ is $\overrightarrow{w^*}$-CM.
Finally, for $\overrightarrow{w}$ satisfying
\[
 2\, \max\{w_1, \cdots, w_d\}- \sum\limits_{i=1}^{d}w_i > 0,
\]
the same result was summarized in Theorem \ref{lem.m8}.
\end{proof}

\section{Marginal Free Herd Behavior Index}\label{hix.section}

In this section, we define the marginal free measure of dependence which can be interpreted as the measure for the herd behavior.
We first review various herd behavior indices, which measure the degree of comovement or comonotonicity \citep{Dhaene2,Daniel14,Ahnhix}.
While such indices need to be marginal free
(because the concept of comovement or comonotonicity is a definition of copula only),
in Subsection \ref{sub.marginal}, we observe that such indices can be distorted by marginal distributions.
Alternatively, Subsection \ref{def.six} presents a definition of measures of dependence that is free of marginal distributions.

\subsection{Review of the Measures of Dependence}
Herd behavior is a general concept often used in various fields such as financial and psychology
to describe the irrational comovement of members in a group.
The recent financial crises have further highlighted the importance of understanding the herd behavior.
There have been several attempts to measure the herd behaviors through herd behavior indices.
In this subsection, we will briefly review some known herd behavior indices in the financial context.

Let ${\overrightarrow{X}}$ be $d$ individual stock prices at a time $t$ assuming that the current time is fixed at $0$.
For the given ${\overrightarrow{X}}$, the market index $S$ is defined as the weighted sum of the $d$ individual stock prices:
\[
S=\sum\limits_{i=1}^{d}w_iX_i,
\]
where weights $w_i$ can be interpreted as the total number of each stock available in the market.

Since a comonotonic random vector ${\overrightarrow{X^c}}:=(X_1^c, \cdots, X_d^c)$ can be represented as
\[
 {\overrightarrow{X^c}}\sim (F_1^{-1}(V), \cdots, F_d^{-1}(V)),
\]
the market index under the comonotonic stock prices assumption,
assuming the marginal distributions of individual stock prices to be unchanged, can be defined as
\[
S^c:=\sum\limits_{i=1}^{d}w_iF^{-1}_{X_i}(V)
\]

Noting the fact that, as shown in Remark \ref{order.var}, the variance of the market index
is maximized when the individual stock prices are comonotonic,
the herd behavior index by \citet{Dhaene2} is defined as
the ratio of variance of the market index to that of the index under the comonotonic assumption.
The following definition defines the simplified version of HIX. 
The original version of HIX defined using the option prices can be found in \citet{Dhaene2}.
\begin{equation*} 
\hix{\overrightarrow{w}, {\overrightarrow{X}}}:= \frac{{\rm Var}[S]}{{\rm Var}[S^c]}.
\end{equation*}
While HIX is a convenient measure which can measure the herd behavior effectively,
HIX may be sensitive to the marginal distributions \citep{Ahnhix}.
The revised version of  HIX (RHIX) defined as
\begin{equation*}
 \begin{aligned}
 \rhix{ \overrightarrow{w},{\overrightarrow{X}}}=\frac{\sum\limits_{i\neq j} w_iw_j{\rm cov}\left( X_i, X_j \right)}  {\sum\limits_{i\neq j} w_iw_j {\rm cov}\left( X_i^c, X_j^c \right)}.\\
\end{aligned}
\end{equation*}
is known to reduce the marginal distribution effects \citep{Ahnhix, Ahn10}. The same measure was proposed by \citet{Daniel14}
from a slightly different perspective.

Importantly, original definition of HIX (hence RHIX) can be calculated using the individual option prices and option price of the market
index \citep{Dhaene2, Daniel}, and these measures can be used as predictors of the degree of herd behaviors in the future as
implied by current option prices.
These are the main reasons why HIX and RHIX are favorable herd behavior indices, although
there may be some preference between HIX and RHIX. Of course, HIX and RHIX can also be estimated from the high frequency stock
market data \citep{Ahn10}.

\subsection{Marginal Dependency of RHIX}\label{sub.marginal}
Despite some controversy, if the perfect herd behavior corresponds to comonotonic movement \citep{Dhaene2},
the herd behavior should be a phenomenon that solely depends on the copula.
In this sense, RHIX may be more favorable than HIX, because it is known to reduce the marginal distribution effects \citep{Ahnhix}.
However, as expected from the definition of RHIX (it is defined based on covariances), RHIX cannot thoroughly remove the marginal effects.
Through a simple example, this section explains that such marginal effects in the calculation of RHIX can be arbitrarily large.

For expository purposes, we consider the following {\bf Toy Model} using the bivariate lognormal distribution, which
is frequently used to describe the stock prices.
\begin{toy*}
 Consider only two assets $\overrightarrow{X}=(X_1, X_2)$ that follow
a bivariate lognormal distribution with drift vector $\overrightarrow{u}$ and covolatility matrix ${\bf \Sigma}$
which is defined as $${\bf \Sigma}=\left(\begin{array}{c c}
                                                          \sigma_1^2 & \rho\sigma_1\sigma_2\\
                                                          \rho\sigma_1\sigma_2 & \sigma_2^2
                                                         \end{array}\right)
.$$
Note that $\overrightarrow{u}$, $\sigma_1$ and $\sigma_2$ are parameters related with marginal distributions, and
$\rho$ is the only parameter for the (Gaussian) copula; see, for example, \citet{Nelson} and \citet{Cherubini} for more details.
\end{toy*}
Under the {\bf Toy Model}, simple calculation shows that
\begin{equation}\label{rhix.2}
\rhix{{\bf w}, {\bf X}}=\frac{\exp(\rho_{12}\sigma_{1}\sigma_{2})-1}{\exp(\sigma_{1}\sigma_{2})-1}.
\end{equation}
We refer to \citet{Ahnhix} for more detailed calculation for HIX and RHIX under the lognormal model.
Now from \eqref{rhix.2}, the following equality shows that RHIX under the {\bf Toy Model }converges to $0$, when the common
volatility coefficient $\sigma_1=\sigma_2$ increases, regardless of
the copula coefficient $\rho$.
\begin{equation}\label{rhix.cov}
 \lim\limits_{\sigma_1=\sigma_2\rightarrow \infty}{\rm RHIX}\left(\overrightarrow{w}, \overrightarrow{X}\right)=0\quad \hbox{for any}
 \quad \rho<1.
\end{equation}
Knowing the degeneracy of RHIX as described in \eqref{rhix.cov}, the convergence rate can be important for the practical use of RHIX.
The following example shows the convergence rate of RHIX in \eqref{rhix.cov} under various circumstances.
\begin{example}\label{ex..1}
Figure \ref{figure.1}. (a) and (b) show
the variation of RHIX in the {\bf Toy Model} depending on the varying common volatility $\sigma:=\sigma_1=\sigma_2$
on the different scale time intervals of $(0,0.5)$ and $(0,5)$ respectively.
As shown in Figure \ref{figure.1}. (a), RHIX looks stable around reasonable weekly volatilities of the stock
markets assuming the weekly volatility to be $0.03$.\footnote{The weekly volatility for the S\&P 500 index and IBM from March to May of
2003 were $0.0309$ and $0.0365$ respectively. }
RHIX around yearly volatility ($=0.03\cdot\sqrt{52}\approx 0.22$)
even looks stable.
However, Figure \ref{figure.1}. (b) shows that RHIX slowly but surely decreases and converges to $0$ as $\sigma$ increases.
\end{example}

\begin{figure}[h!]
    \centering
    \subfloat[RHIX for various $\rho$ on the interval $(0,0.5)$]{%
      \includegraphics[width=0.39\textwidth]{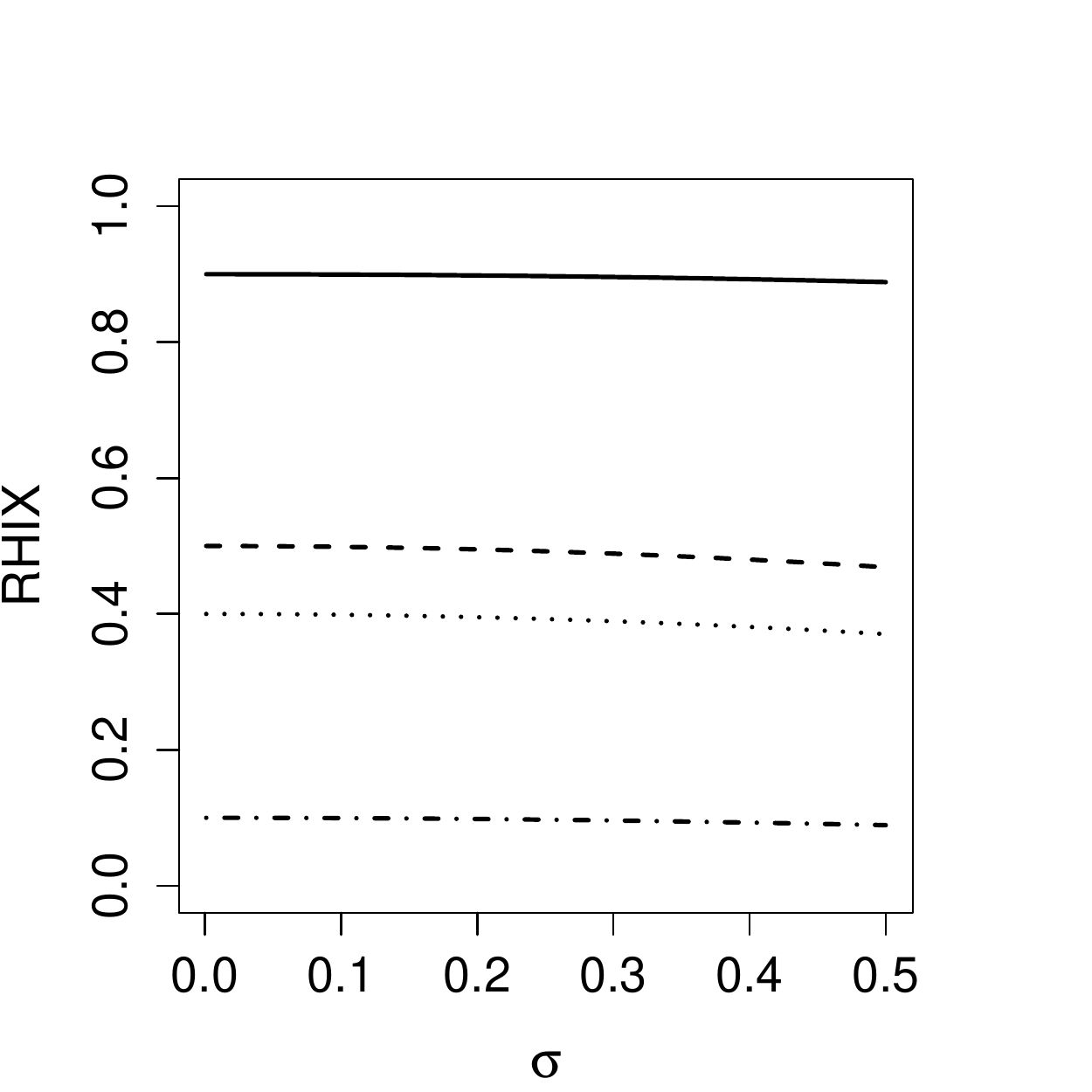}
    }
        \subfloat[RHIX for various $\rho$ on the interval $(0,5)$]{%
      \includegraphics[width=0.54\textwidth]{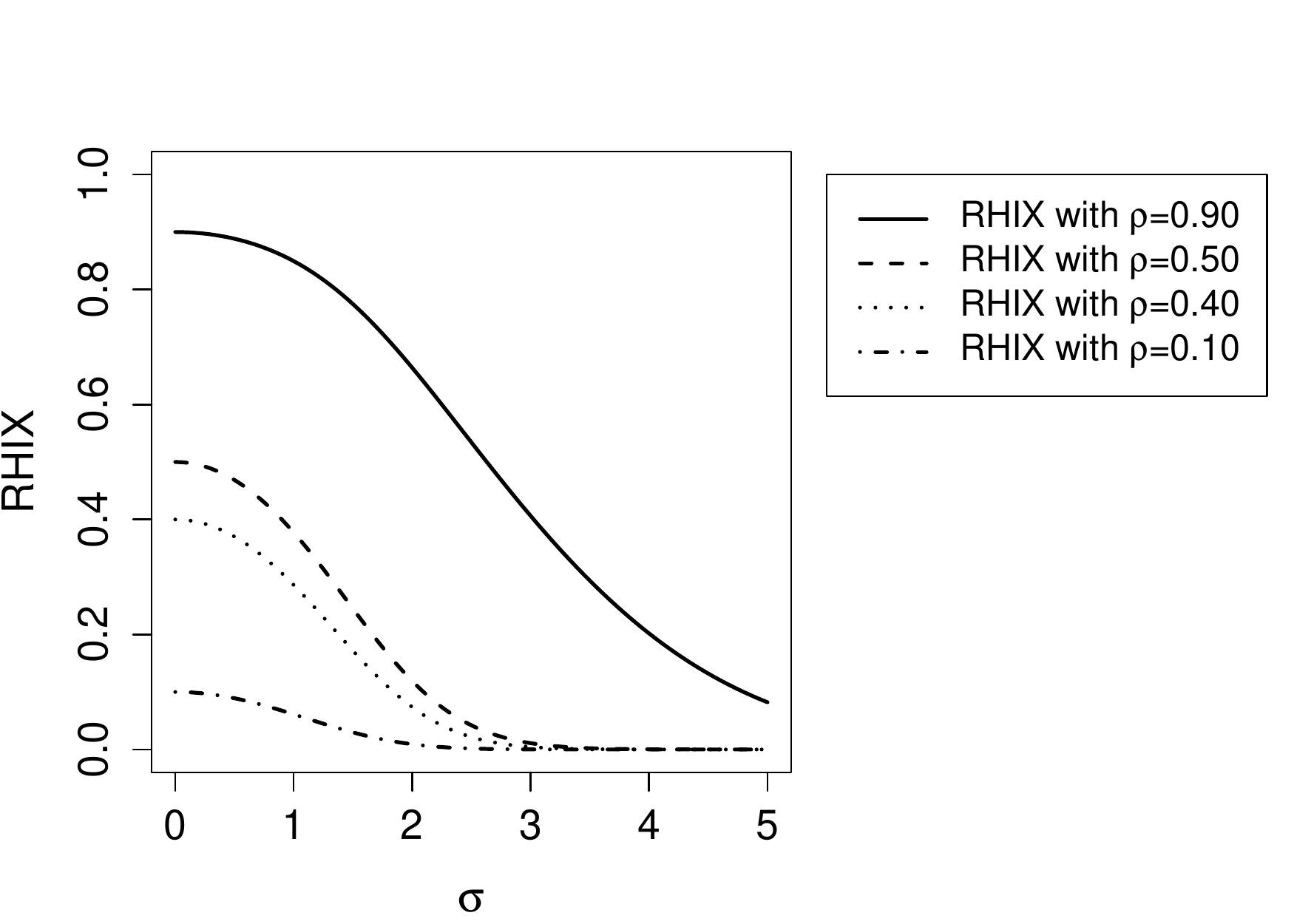}
    }
      \caption{RHIX with volatility effects.}
     \label{figure.1}
  \end{figure}

\subsection{New Herd Behavior Index: The Marginal Free Measure of Dependence}\label{def.six}
In the following definition, we propose a new herd behavior index that is free of marginal distribution and hence defined in terms of copula only.

\begin{definition}
For a given random vector $\overrightarrow{X}$, Spearman's rho type of the Herd Behavior Index (SIX) is defined as
\begin{equation*}
 \begin{aligned}
\six{\overrightarrow{w},{\overrightarrow{X}}}&:=\frac{\sum\limits_{i<j}  w_iw_j\rho_2\left( X_i, X_j \right)}  {\sum\limits_{i< j}  w_iw_j\rho_2\left( X_i^c, X_j^c \right)},\\
&=\frac{\sum\limits_{i< j} w_iw_j \rho_2\left( X_i, X_j \right)}  {\sum\limits_{i< j} w_iw_j},
 \end{aligned}
\end{equation*}
where Spearman's rho $\rho_2$ is defined as
\[
 \rho_2(X_i, X_j)=3\left(\P{(X_i-X_i^*)(X_j-X_j^{**})>0}- \P{(X_i-X_i^*)(X_j-X_j^{**})<0}\right)
\]
with $(X_i^*, X_j^*)$ and $(X_i^{**}, X_j^{**})$ are independent copies of $(X_i, X_j)$.
Sometimes we use $\six{\overrightarrow{w},H}$ to denote $\six{\overrightarrow{w},{\overrightarrow{X}}}$.
Note that SIX coincides with pairwise Spearman's rho defined in \citet{Schmid2}
with the equal weights $w_1=\cdots=w_d$.
\end{definition}

Since bivariate Spearman's rho does not depend on the marginal distribution, clearly SIX does not depend on the marginal distributions.
Hence, for continuous marginals, we have
\[
 \six{\overrightarrow{w},H}=\six{\overrightarrow{w},C}.
\]

Since SIX can be obtained by replacing the covariance terms in RHIX with the Spearman's rho terms,
it can be interpreted as the ratio of the weighted pairwise Spearman's rho of stock prices to
the weighted average of Spearman's rho of stock prices under the comonotonic assumption.
Furthermore, similar to RHIX as in \citet{Ahn10}, SIX can be expressed as the weighted average of the pairwise Spearman's rhos as shown below
\[
 \six{\overrightarrow{w},\overrightarrow{X}}=\E{Z}
\]
where
\[
 \P{Z=\rho_2(X_i, X_j)}=p_{i,j}
\]
with
\[
 p_{i,j}:=\frac{w_iw_j}{\sum\limits_{k\neq l}^{d} w_kw_l}.
\]

 Unlike HIX or RHIX, the calculation of SIX using the vanilla option prices may be difficult in reality
because, whereas the calculation of HIX and RHIX requires the option prices on the individual stocks and the market index, the calculation of SIX requires
the option prices related to every pairs of the individual stock prices.
As an alternative, high frequency stock price data can be used for the estimation of SIX: a detailed method for the estimation of
HIX and RHIX using high frequency stock price data can be found in \citet{Ahn10} and a similar method can be applied to the estimation of SIX.
Empirical analysis of herd behaviors in the stock market using stock price data and SIX can be found in \citet{Ahn7}.
\begin{remark}
In the calculation of HIX and RHIX, we have to calculate the variance or covariance under the comonotonic assumptions.
Hence, in the calculation of HIX and RHIX, an assumption on the marginal distributions
is essential as shown in \citet{Ahn10}, where lognormal distributions are assumed.
However, for the calculation of SIX, since Spearman's rho under comonotonic assumption is always $1$ regardless of the marginal distributions,
marginal assumption is not necessary.
\end{remark}

The following example present the representation of SIX in the multivariate log-normal distribution, and confirms
that SIX is free of marginal distribution.

\begin{example}\label{ex.formula}

For $\overrightarrow{w}\in(0,\infty)^d$ and $d$-variate log-normal random vector $\overrightarrow{X}=(X_1, \cdots,X_d)$ with drift vector $\overrightarrow{\mu}$ and covolatility matrix ${\mathbf \Sigma}$, SIX can be represented as
\begin{equation}\label{ex.formula.1}
\begin{aligned}
 \six{\bf w, {\bf X}} &=\frac{\sum\limits_{i\neq j}^{d} w_iw_j\frac{6}{\pi}{\rm acrsin}\left(\rho_{i,j} /2\right)     }{\sum\limits_{i \neq j}^{d} w_iw_j }  \\
  &=\sum\limits_{i\neq j}^{d} c_{i,j}\frac{6}{\pi}{\rm acrsin}\left(\rho_{i,j} /2\right)     \\
\end{aligned}
\end{equation}
where $ \rho_{i,j} =  \frac {{\mathbf \Sigma}_{i,j}}{\sqrt {{\mathbf \Sigma}_{i,i}{\mathbf \Sigma}_{j,j}}}$, and the first inequality is from
\citet{Kendall2}.
\end{example}

Spearman's rho preserves the concordance ordering, and one can easily expect that SIX also preserves the concordance ordering.
Hence it is possible to show that the maximum of SIX is achieved with the comonotonic copula.
However, due to the absence of the minimum copula in the Fr\'echet Space, the minimum of SIX is not as clear.
The following theorem provides some properties of SIX and determines the maximum and minimum of SIX.

\begin{theorem}\label{prop.six}
 For given $\overrightarrow{w}$, define $S_1:=w_1+\cdots+w_d$ and $S_2:=w_1^2+\cdots+ w_d^2$.
 Then, for the given distribution functions $H:=C(F_1, \cdots, F_d)$ and $H^*:=C^*(F_1, \cdots, F_d)$, the following holds:
 \begin{enumerate}
 \item[i.] If copulas $C,\,C^*\in\mathcal{F}_d$ satisfy $C \prec C^*$, then
 $\six{\overrightarrow{w}, H}\le \six{\overrightarrow{w}, H^*}$.
  \item[ii.] SIX satisfies
  \begin{equation}\label{six.eq.1}
\frac{1}{S_1^2-S_2}\left[ 12 \, l(\overrightarrow{w})-S_2\right] \le \six{\overrightarrow{w}, H} \le 1,
  \end{equation}
  where the definition of $l(\cdot)$ can be found in Notation \ref{ref.ref}.
\item[iii.] The upper bound of \eqref{six.eq.1} is attained if and only if\, $H$ is comonotonic.
\item[iv.] The lower bound of \eqref{six.eq.1} is attained if and only if
$H$ is $\overrightarrow{w^*}$-CM, where $\overrightarrow{w^*}$ is defined in \eqref{def.w} of Notation \ref{ref.ref}.

 \end{enumerate}
\end{theorem}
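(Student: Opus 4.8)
The plan is to reduce every part to the variance bounds already established in Corollary \ref{lem.m1}, the bridge being a one‑line algebraic identity between $\six{\overrightarrow{w},H}$ and $\Var{\sum_{i=1}^{d}w_iU_i}$. Since $\overrightarrow{w}$-CM, comonotonicity and SIX are all properties of the copula only, I will work throughout with the copula‑distributed vector $\overrightarrow{U}$, for which $\six{\overrightarrow{w},C}=\six{\overrightarrow{w},H}$.

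For part (i) I would invoke the classical fact that bivariate Spearman's rho is monotone with respect to concordance ordering: writing $\rho_2$ via the single‑integral formula for a bivariate copula, it is pointwise monotone in that copula, and $C\prec C^*$ forces each bivariate margin of $C$ to be dominated by the corresponding margin of $C^*$ (set the remaining coordinates to $1$ for the distribution part and to the lower limit for the survival part). Hence $\rho_2(X_i,X_j)\le\rho_2(X_i^*,X_j^*)$ for every pair $(i,j)$, and since $\six{\overrightarrow{w},H}=\sum_{i<j}p_{i,j}\,\rho_2(X_i,X_j)$ is a convex combination with the strictly positive weights $p_{i,j}=w_iw_j/\big(\sum_{k\neq l}w_kw_l\big)$, monotonicity of SIX follows at once.

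The heart of the argument is the identity. Using $\Var{U_i}=\tfrac1{12}$ and $\cov{U_i,U_j}=\tfrac1{12}\rho_2(U_i,U_j)$ (Spearman's rho of a pair of uniforms is exactly their Pearson correlation), I expand
\[
\Var{\sum_{i=1}^{d}w_iU_i}=\frac{1}{12}\left[S_2+\sum_{i\neq j}w_iw_j\,\rho_2(U_i,U_j)\right].
\]
Because $\sum_{i<j}w_iw_j=\tfrac12(S_1^2-S_2)$, the off‑diagonal sum equals $\six{\overrightarrow{w},H}\,(S_1^2-S_2)$, so
\[
\Var{\sum_{i=1}^{d}w_iU_i}=\frac{1}{12}\Big[S_2+\six{\overrightarrow{w},H}\,(S_1^2-S_2)\Big],
\]
i.e. $\six{\overrightarrow{w},H}$ is a strictly increasing affine function of $\Var{\sum_iw_iU_i}$ (note $S_1^2-S_2>0$ for $d\ge2$). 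Parts (ii)–(iv) are then immediate: substituting the two‑sided bound $l(\overrightarrow{w})\le\Var{\sum_iw_iU_i}\le\tfrac1{12}S_1^2$ from Corollary \ref{lem.m1} gives the lower bound $\six{\overrightarrow{w},H}\ge\frac{1}{S_1^2-S_2}\big[12\,l(\overrightarrow{w})-S_2\big]$, attained iff $\overrightarrow{U}$ is $\overrightarrow{w^*}$-CM, and the upper bound $\six{\overrightarrow{w},H}\le1$, attained iff $\overrightarrow{U}$ is comonotonic; for the latter equality case I would also record that for a continuous random vector pairwise comonotonicity is equivalent to joint comonotonicity, so that "$H$ comonotonic" is precisely the condition.

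I do not expect a genuine obstacle here — the theorem is essentially a cosmetic reformulation of Corollary \ref{lem.m1}. The only points requiring care are the constant $\cov{U_i,U_j}=\tfrac1{12}\rho_2(U_i,U_j)$ and keeping the bookkeeping of $S_1,S_2$ and the factor $12$ straight, together with the (standard) lemma that concordance‑ordered $d$-copulas have concordance‑ordered bivariate margins, which underlies part (i).
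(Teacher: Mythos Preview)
Your proposal is correct and follows essentially the same route as the paper: the paper proves part~i by invoking the concordance property of bivariate Spearman's rho and the linear structure of SIX, and proves parts~ii--iv by expanding $\Var{\sum_i w_iU_i}$ in terms of $\sum_{i<j}w_iw_j\cov{U_i,U_j}$, identifying $\rho_2(U_i,U_j)=12\,\cov{U_i,U_j}$, and then reading off the bounds and equality cases from Corollary~\ref{lem.m1}. Your presentation is a bit more explicit (you package the key computation as a single affine identity between SIX and the variance, and you spell out why $d$-variate concordance ordering descends to bivariate margins), but there is no substantive difference in method.
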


\begin{proof}
The proof of part i comes from the concordance property of Spearman's rho and the fact that SIX is a linear combination of bivariate Spearman's rho.
For the proof of the remaining parts,
note that
\[
\six{\overrightarrow{w}, H}=\six{\overrightarrow{w}, C}
\]
and
\begin{equation} \label {kang.eq1}
\begin{aligned}
\Var{\sum_{i=1}^{d} {w_i U_i}}
&=\E{\left( \sum_{i=1}^{d} {w_i U_i} -c \right)^2}\\
&=c^2 - 2 c\sum_{i=1}^d w_i\E{U_i} + \E{\left( \sum_{i=1}^{d} {w_i U_i} \right)^2}\\
&=c^2 - c\sum_{i=1}^d w_i + \sum_{i=1}^d w_i^2\E{U_i^2} + 2\sum_{i< j} w_i w_j \E{U_iU_j}  \\
&=\frac {1}{12}{ S_2} + 2\sum_{i< j} {w_i w_j {\rm Cov} (U_i,U_j)}\\
\end{aligned}
\end{equation}
where a constant $c$ is defined as $$c:=\frac{1}{2}S_1.$$
Now, Theorem \ref{lem.m1} and \eqref{kang.eq1} derive that
\begin{equation} \label {kang.eq100}
l(\overrightarrow{w}) \le\frac {1}{12}{ S_2} + 2\sum_{i< j} {w_i w_j {\rm Cov} (U_i,U_j)}
\le \frac{1}{12}\left( \sum\limits_{i=1}^{d}w_i\right)^2,
\end{equation}
where the first equality in the first inequality is achieved if and only if
$\overrightarrow{U}$ is $\overrightarrow{w^*}$-CM, and the second inequality is achieved
if and only if $\overrightarrow{U}$ is comonotonic.
Now \eqref{kang.eq100} and the following observation
\begin{equation*}
\begin{aligned}
 \rho_2 (X_i,X_j) &= \rho_2 (U_i,U_j)\\
 &=12\; Cov(U_i,U_j),
\end{aligned}
\end{equation*}
conclude the following inequalities
\begin{equation*}
   \frac{1}{S_1^2-S_2}\left[12\, l(\overrightarrow{w})-S_2\right] \le \six{\overrightarrow{w}, H} \le 1
   \end{equation*}
where the first equality holds if and only if
$H$ is $\overrightarrow{w^*}$-CM and the second inequality holds
if and only if $H$ is comonotonic.

\end{proof}

\subsection{Data Analysis}
In this subsection, we analyze the herd behaviors in the US stock market using SIX.
Daily stock prices $\overrightarrow{X}(t)$ of three stocks are collected from Apple, Hewlett-Packard Company and New York Times
in the time interval between $t=$2001/March/01 and $t=$2014/April/09.
Under the lognormal model, the line graph (${\rm SIX}$) in Figure \ref{figure.507} shows estimated SIX,
where SIX at each point is estimated based on 4 month observations.
Similar to \citet{Ahnhix}, three stock prices shows generally strong herd behavior during
the global financial crisis starting from 2008.

Sometimes, we may be interested in the relationship between the the stock prices of three companies only.
For example, we may assume that the stock prices of three companies reflect the preferences
between the traditional media system (newspapers), the traditional internet based media system (computers), and
the mobile internet based media system (smartphones or tablets).
However, strong dependency of the three stock prices may not stand for the strong dependency between three companies in particular,
because the strong comovement of the stock prices during the period
may be the result of the illusion effect caused by devaluation of the whole stock market (the global economic crisis in 2008, for example).
Hence, to understand the actual physical relation between three stock prices,
it can be beneficial to consider the detranded stock price by the market index
(S\&P in this data analysis) defined as follows:
$$\overrightarrow{X^M}(t):=\overrightarrow{X}(t)/S(t),$$
where $S(t)$ is S\&P index.
The dashed line graph (${\rm SIX}^M$)in Figure \ref{figure.507} shows estimated SIX using the adjusted stock price $\overrightarrow{X^M}$ on the same time interval.
Here, we have used the weight $\overrightarrow{w}=(1,1,1)$. Note that, under the lognormal model in \eqref{ex.formula},
specific statistical estimation procedures can be found in \citet{Ahn10}, for example.

From Figure \ref{figure.507}, we can conclude that main source of the comovement during the global financial crisis
is the devaluation of the whole stock market. After removing the comovement effect by the global financial crisis,  comovement of adjusted stock prices
$\overrightarrow{X}^M$ is not as strong.

   \begin{figure}[h!]
    \centering
      \includegraphics[width=0.7\textwidth]{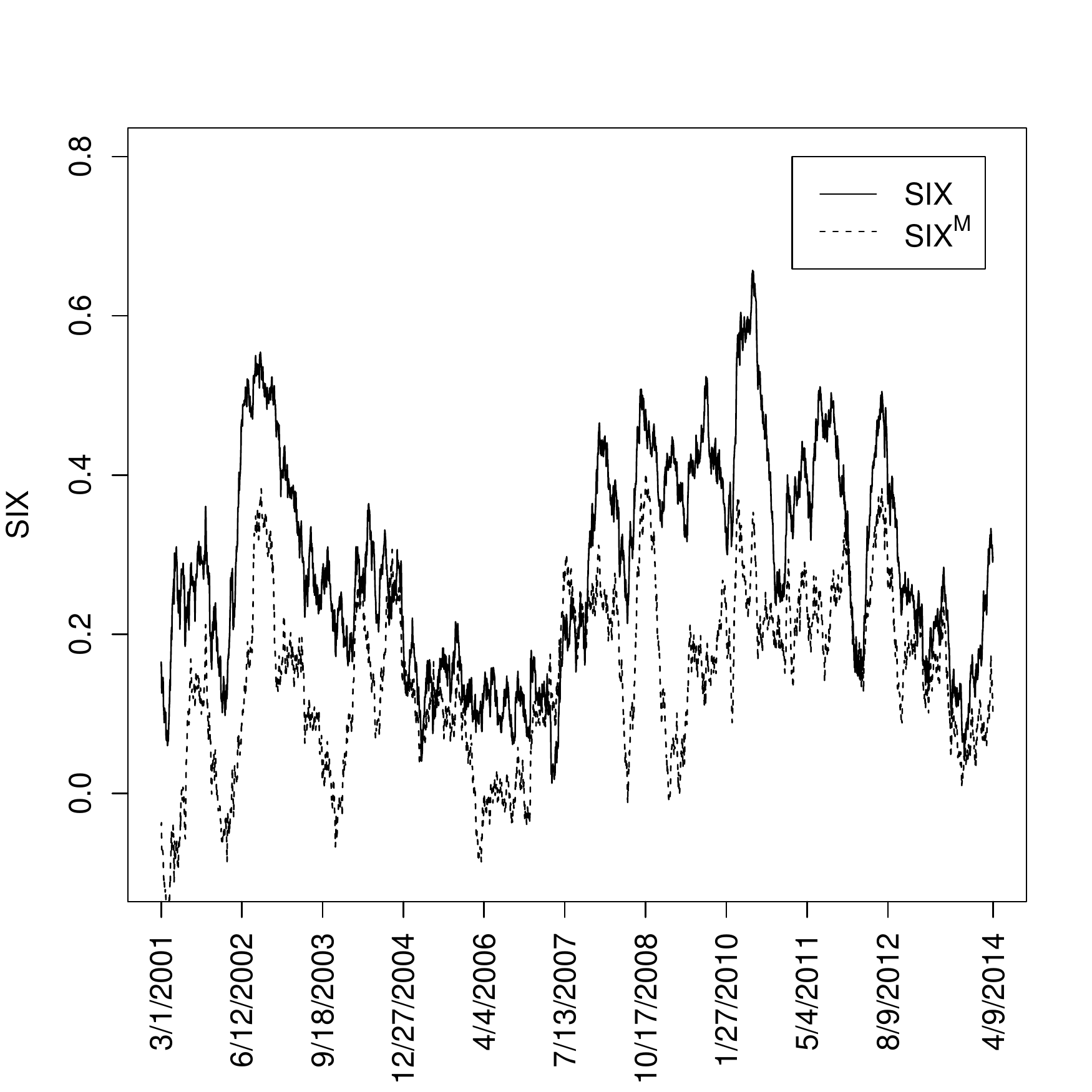}
      \caption{SIX from 2001/March/01 to 2014/April/09 with weight vector $(w_1, w_2, w_3)=(1,1,1)$.}
      \label{figure.507}
  \end{figure}

%
%
%
%
%

\section{Conclusion}
In this paper, we have provided the set of copulas called $\overrightarrow{w}$-CM copulas, and have shown these to be the minimal in set concordance ordering.
Given the absence of the minimum copula, the minimality can be important in optimization problems.
Especially, we show that the proposed set of copulas minimize the variance of the aggregated sum where
the marginal distributions are given as various uniform distributions.
As shown in Remark \ref{order.var} in Section \ref{optim}, the set of minimal copulas
can be related with the variance of aggregated sum with given marginal distributions.
In this respect, the approach using
$d$-CM copulas, which are the generalized version of $\overrightarrow{w}$-CM copulas, can be shown to be useful in minimizing
the variance of the aggregated sum for some special marginal distributions.
We leave this topic for future research.

Finally, although $\overrightarrow{w}$-CM copulas do not minimize the variance of the aggregated sum in general when the marginal
distributions are not uniform distributions, many other interesting optimization problems have uniform marginals as their solutions.
Optimization of the herd behavior index is one such example. In this paper, we have provided a herd behavior index that
does not depend on the marginal distributions, and showed that the herd behavior index is minimized with $\overrightarrow{w}$-CM copulas.

\section*{Acknowledgements}
For Jae Youn Ahn, this work was supported by the National Research Foundation of Korea(NRF) grant funded
by the Korean Government (2013R1A1A1076062).

\bibliographystyle{apalike}
\bibliography{CTE_Bib_HIX}

\begin{thebibliography}{}

\bibitem[Bernard et~al., 2014]{Ruodu5}
Bernard, C., Jiang, X., and Wang, R. (2014).
\newblock Risk aggregation with dependence uncertainty.
\newblock {\em Insurance Math. Econom.}, 54:93--108.

\bibitem[Cherubini et~al., 2004]{Cherubini}
Cherubini, U., Luciano, E., and Vecchiato, W. (2004).
\newblock {\em Copula methods in finance}.
\newblock Wiley Finance Series. John Wiley \& Sons Ltd., Chichester.

\bibitem[Cheung et~al., 2015]{Cheung6}
Cheung, K.~C., Denuit, M., and Dhaene, J. (2015).
\newblock Tail mutual exclusivity and tail-var lower bounds.
\newblock {\em FEB Research report AFI\_15100}.

\bibitem[Cheung et~al., 2011]{Cheung2}
Cheung, K.~C., Dhaene, J., and Tang, Q. (2011).
\newblock On partial hedging and counter-monotonic sums.
\newblock {\em Available at SSRN 1966995}.

\bibitem[Cheung and Lo, 2014]{Cheung4}
Cheung, K.~C. and Lo, A. (2014).
\newblock Characterizing mutual exclusivity as the strongest negative
  multivariate dependence structure.
\newblock {\em Insurance Math. Econom.}, 55:180--190.

\bibitem[Cheung and Vanduffel, 2013]{Cheung3}
Cheung, K.~C. and Vanduffel, S. (2013).
\newblock Bounds for sums of random variables when the marginal distributions
  and the variance of the sum are given.
\newblock {\em Scandinavian Actuarial Journal}, 2013(2):103--118.

\bibitem[Choi et~al., 2013]{Ahnhix}
Choi, Y., Kim, C., Lee, W., and Ahn, J.~Y. (2013).
\newblock Analyzing herd behavior in global stock markets: An intercontinental
  comparison.
\newblock {\em arXiv preprint arXiv:1308.3966}.

\bibitem[Cui and Sun, 2004]{Cui}
Cui, S. and Sun, Y. (2004).
\newblock Checking for the gamma frailty distribution under the marginal
  proportional hazards frailty model.
\newblock {\em Statistica Sinica}, 14(1):249--267.

\bibitem[Dhaene and Denuit, 1999]{Dhaene4}
Dhaene, J. and Denuit, M. (1999).
\newblock The safest dependence structure among risks.
\newblock {\em Insurance Math. Econom.}, 25(1):11--21.

\bibitem[Dhaene et~al., 2002a]{Dhaene5}
Dhaene, J., Denuit, M., Goovaerts, M.~J., Kaas, R., and Vyncke, D. (2002a).
\newblock The concept of comonotonicity in actuarial science and finance:
  applications.
\newblock {\em Insurance: Mathematics \& Economics}, 31(2):133--161.

\bibitem[Dhaene et~al., 2002b]{Dhaene}
Dhaene, J., Denuit, M., Goovaerts, M.~J., Kaas, R., and Vyncke, D. (2002b).
\newblock The concept of comonotonicity in actuarial science and finance:
  Theory.
\newblock {\em Insurance: Mathematics and Economics}, 31(1):3--33.

\bibitem[Dhaene et~al., 2012]{Dhaene2}
Dhaene, J., Linders, D., Schoutens, W., and Vyncke, D. (2012).
\newblock The {H}erd {B}ehavior {I}ndex: a new measure for the implied degree
  of co-movement in stock markets.
\newblock {\em Insurance: Mathematics and Economics}, 50(3):357--370.

\bibitem[Dhaene et~al., 2014a]{Daniel14}
Dhaene, J., Linders, D., Schoutens, W., and Vyncke, D. (2014a).
\newblock A multivariate dependence measure for aggregating risks.
\newblock {\em J. Comput. Appl. Math.}, 263:78--87.

\bibitem[Dhaene et~al., 2014b]{Dhaene3}
Dhaene, J., Linders, D., Schoutens, W., and Vyncke, D. (2014b).
\newblock A multivariate dependence measure for aggregating risks.
\newblock {\em J. Comput. Appl. Math.}, 263:78--87.

\bibitem[Dhaene et~al., 2006]{Goovaerts3}
Dhaene, J., Vanduffel, S., Goovaerts, M.~J., Kaas, R., Tang, Q., and Vyncke, D.
  (2006).
\newblock Risk measures and comonotonicity: a review.
\newblock {\em Stochastic Models}, 22(4):573--606.

\bibitem[Frees and Valdez, 1998]{Frees}
Frees, E.~W. and Valdez, E.~A. (1998).
\newblock Understanding relationships using copulas.
\newblock {\em North American Actuarial Journal}, 2(1):1--25.

\bibitem[Gaffke and R{\"u}schendorf, 1981]{Ruschendorf}
Gaffke, N. and R{\"u}schendorf, L. (1981).
\newblock On a class of extremal problems in statistics.
\newblock {\em Mathematische Operationsforschung und Statistik Series
  Optimization}, 12(1):123--135.

\bibitem[Genest et~al., 2007]{Genest}
Genest, C., Favre, A., B{\'e}liveau, J., and Jacques, C. (2007).
\newblock Metaelliptical copulas and their use in frequency analysis of
  multivariate hydrological data.
\newblock {\em Water Resources Research}, 43(9):W09401.

\bibitem[Joe, 1997]{Joe2}
Joe, H. (1997).
\newblock {\em Multivariate models and dependence concepts}, volume~73 of {\em
  Monographs on Statistics and Applied Probability}.
\newblock Chapman \& Hall, London.

\bibitem[Kaas et~al., 2002]{Kaas}
Kaas, R., Dhaene, J., Vyncke, D., Goovaerts, M.~J., and Denuit, M. (2002).
\newblock A simple geometric proof that comonotonic risks have the
  convex-largest sum.
\newblock {\em Astin Bull.}, 32(1):71--80.

\bibitem[Kendall and Gibbons, 1990]{Kendall2}
Kendall, M. and Gibbons, J.~D. (1990).
\newblock {\em Rank correlation methods}.
\newblock A Charles Griffin Title. Edward Arnold, London, fifth edition.

\bibitem[Kotz and Seeger, 1992]{Kotz2}
Kotz, S. and Seeger, J.~P. (1992).
\newblock Lower bounds on multivariate distributions with preassigned
  marginals.
\newblock In {\em Stochastic inequalities ({S}eattle, {WA}, 1991)}, volume~22
  of {\em IMS Lecture Notes Monogr. Ser.}, pages 211--218. Inst. Math.
  Statist., Hayward, CA.

\bibitem[Lee and Ahn, 2014a]{Ahn10}
Lee, W. and Ahn, J.~Y. (2014a).
\newblock Financial interpretation of herd behavior index and its statistical
  estimation.
\newblock {\em Journal of the Korean Statistical Society, In Press}.

\bibitem[Lee and Ahn, 2014b]{Ahn7}
Lee, W. and Ahn, J.~Y. (2014b).
\newblock On the multidimensional extension of countermonotonicity and its
  applications.
\newblock {\em Insurance: Mathematics and Economics}, 56:68--79.

\bibitem[Linders and Schoutens, 2014]{Daniel}
Linders, D. and Schoutens, W. (2014).
\newblock A framework for robust measurement of implied correlation.
\newblock {\em J. Comput. Appl. Math.}, 271:39--52.

\bibitem[Nelsen, 2006]{Nelson}
Nelsen, R.~B. (2006).
\newblock {\em An introduction to copulas}.
\newblock Springer Series in Statistics. Springer, New York, second edition.

\bibitem[Puccetti et~al., 2012]{Ruodu2}
Puccetti, G., Wang, B., and Wang, R. (2012).
\newblock Advances in complete mixability.
\newblock {\em Journal of Applied Probability}, 49(2):430--440.

\bibitem[Puccetti and Wang, 2014]{Ruodu3}
Puccetti, G. and Wang, R. (2014).
\newblock General extremal dependence concepts.
\newblock {\em Available at SSRN 2436392}.

\bibitem[R{\"u}schendorf and Uckelmann, 2002]{Ruschendorf2}
R{\"u}schendorf, L. and Uckelmann, L. (2002).
\newblock Variance minimization and random variables with constant sum.
\newblock In {\em Distributions with given marginals and statistical
  modelling}, pages 211--222. Kluwer Academic Publishers, Dordrecht.

\bibitem[Schmid and Schmidt, 2007]{Schmid2}
Schmid, F. and Schmidt, R. (2007).
\newblock Multivariate extensions of {S}pearman's rho and related statistics.
\newblock {\em Statistics \& Probability Letters}, 77(4):407--416.

\bibitem[Wang and Wang, 2011]{Ruodu}
Wang, B. and Wang, R. (2011).
\newblock The complete mixability and convex minimization problems with
  monotone marginal densities.
\newblock {\em Journal of Multivariate Analysis}, 102(10):1344--1360.

\bibitem[Wang and Wang, 2014]{Ruodu4}
Wang, B. and Wang, R. (2014).
\newblock Joint mixability.
\newblock {\em Preprint, University of Waterloo}.

\end{thebibliography}
\appendix
\section{}

\citet{Ahn7} proposed the class of minimal copulas
which can be viewed as alternatives to countermonotonicity in multivariate dimensions.

\begin{definition}[\citet{Ahn7}]\label{def.multi3}
A $d$-variate random vector $\overrightarrow{U}$ will be called $d$-countermonotonic
($d$-CM) or $d$-CM   if there exist function $(f_1, \cdots, f_d)\in\mathcal{M}_+^d[0,1]$
and
\begin{equation}\label{eq.def.5}
\sum\limits_{j=1}^d f_i(U_i)=c
\end{equation}
with probability $1$ for some constant $c\in\Real$.
Equivalently, we say that the distribution function $C$ is $d$-CM if $\overrightarrow{U}$ is $d$-CM.
Especially, for the choice of functions with $c=1$ in \eqref{eq.def.5},
$\overrightarrow{U}$ is called $d$-CM with parameter functions $(f_1, \cdots, f_d)$.
\end{definition}

Since \citet{Ahn7} have shown that $d$-CM does not depend on marginal distributions (see Lemma 1 in \citet{Ahn7}), we
provide a version of $d$-CM definition for a copula only in this appendix.
As we have briefly mentioned in Section \ref{sec.3},
$d$-CM may be too general to be used for the extreme negative dependence as it includes
almost countermonotonic movement.
Alternatively, \citet{Ahn7} provide a definition of strict $d$-CM as a subset of $d$-CM in the following sense.

\begin{definition}[\citet{Ahn7}]\label{def.multi4}
A $d$-variate random vector $\overrightarrow{U}$ is strict $d$-CM if
\begin{equation*}
\P{\sum\limits_{j=1}^d U_i=\frac{d}{2}}=1.
\end{equation*}
Equivalently, we say that $H$ is \it strict $d$-CM if $\overrightarrow{U}$ is strict $d$-CM.
\end{definition}
It is obvious that strictly $d$-CM is $d$-CM having constant multiplication of identity functions as parameter functions: i.e.
$$f_1(v)=\cdots=f_d(v)=\frac{2}{d}\cdot v$$
for $v\in[0,1]$.
The existence of a strict $d$-CM copula is shown in \citet{Ruschendorf2, Ahn7}.
Strict $d$-CM is useful in various minimization/maximization problems \citep{Ahn7}.

\begin{proof}[Proof of Corollary \ref{rem.2} ]
Showing Corollary \ref{rem.2} is equivalent to show that
for any given $\overrightarrow{w}$-CM copula $C$ and $C^*\in\mathcal{F}_d$ satisfying
$$C^*\prec C,$$
implies that $C^*$ is also $\overrightarrow{w}$-CM.

First observe that if $\overrightarrow{w}$ does not satisfy \eqref{cond.min}, then $\mathbb{C}$ is empty and the proof is trivial.
So we can assume that $\overrightarrow{w}$ satisfies \eqref{cond.min} and $\mathbb{C}$ is not empty.
 Now, define two sets
\[
 \mathcal{F}_c:=\left\{\overrightarrow{u}\in [0,1]^d\bigg\vert \sum\limits_{i=1}^{d}w_iu_i < \frac{\sum\limits_{i=1}^{d}w_i}{2}\right\},
\]
and
\[
 \mathcal{Q}_c:=\left\{\overrightarrow{u}\in [0,1]^d\bigg\vert \sum\limits_{i=1}^{d}w_iu_i < \frac{\sum\limits_{i=1}^{d}w_i}{2}\;\;\hbox{and}\;\; \hbox{$u_1, \cdots, u_d$ are rational numbers}\right\}.
\]
Then, by the denseness of rational numbers in real line, we have
 \[
 \begin{aligned}
   \left\{\overrightarrow{x}\in [0,1]^d \bigg\vert \sum\limits_{i=1}^{d}w_ix_i < \frac{\sum\limits_{i=1}^{d}w_i}{2} \right\} &=  \bigcup\limits_{\overrightarrow{u}\in  \mathcal{F}_c} \left\{\overrightarrow{x}\in [0,1]^d\bigg\vert  \overrightarrow{x} < \overrightarrow{u}\right\}\\
   &=  \bigcup\limits_{\overrightarrow{u}\in  \mathcal{Q}_c} \left\{\overrightarrow{x}\in [0,1]^d\bigg\vert  \overrightarrow{x} < \overrightarrow{u}\right\}.
 \end{aligned}
 \]
 which in turn implies
\begin{equation}\label{revision.3}
\begin{aligned}
 \P{  \sum\limits_{i=1}^{d}w_iU_i^* < \frac{\sum\limits_{i=1}^{d}w_i}{2}  } &= \P{\overrightarrow{U^*} \in  \left\{\overrightarrow{x}\in [0,1]^d \bigg\vert \sum\limits_{i=1}^{d}w_ix_i < \frac{\sum\limits_{i=1}^{d}w_i}{2} \right\}  }
 \\&=  \P{ \overrightarrow{U^*} \in \bigcup\limits_{\overrightarrow{u}\in  \mathcal{Q}_c} \left\{\overrightarrow{x}\in [0,1]^d\bigg\vert  \overrightarrow{x} < \overrightarrow{u}\right\}  }\\
 &\le  \sum\limits_{\overrightarrow{u}\in  \mathcal{Q}_c} \P{\overrightarrow{U^*}< \overrightarrow{u}}\\
 &=0.
\end{aligned}
\end{equation}
where the last inequality holds because $\mathcal{Q}_c$ is countable set.
Similar logic derives
\[
 \P{  \sum\limits_{i=1}^{d}w_iU_i^* > \frac{\sum\limits_{i=1}^{d}w_i}{2}  }=0
\]
which in turn concludes the proof with \eqref{revision.3}.
\end{proof}

\end{document}